\documentclass[12pt]{article}
\usepackage{graphicx,multicol}
\usepackage{epstopdf}
\usepackage{amsmath,amssymb,amsfonts,thumbpdf}
\usepackage{amsthm}
\usepackage{enumerate}
\usepackage{bm}
\usepackage{multirow}
\usepackage[authoryear]{natbib}
\usepackage[colorlinks,citecolor=blue]{hyperref}
\usepackage{color}
\usepackage{algorithm}
\usepackage{algorithmic}
\usepackage{float,subcaption}
\usepackage{url, xcolor}
\usepackage{rotating}

\usepackage{natbib}

\newtheorem{lemma}{Lemma}{\bf}{\it}
\newtheorem{theorem}{Theorem}{\bf}{\it}
{\bf}{\it}

\theoremstyle{definition}
{\bf}{\it}
{\bf}{\it}
{\bf}{\it}
\newtheorem{remark}{Remark}{\bf}{\it}


\def\Pr{\mathrm{Pr}}

\usepackage[left=1in, right=1in]{geometry}
\graphicspath{{sim_plot/}}
\allowdisplaybreaks

\begin{document}

\def\spacingset#1{\renewcommand{\baselinestretch}%
{#1}\small\normalsize} \spacingset{1}



  \title{\LARGE\bf STRAW: Structure-Adaptive Weighting Procedure for Large-Scale Spatial Multiple Testing}
  \author{Pengfei Wang, Pengyu Yan and Canhui Li\footnote{
  		Author for Correspondence: Canhui Li, E-mail: lich380@nenu.edu.cn. School of Mathematics and Statistics, Henan University, Kaifeng 475004, China.}
}

  \maketitle

\bigskip
\begin{abstract}
The problem of large-scale spatial multiple testing is often encountered in various scientific research fields, where the signals are usually enriched on some regions while sparse on others. To integrate spatial structure information from nearby locations, we propose a novel approach, called {\bf STR}ucture-{\bf A}daptive {\bf W}eighting (STRAW) procedure, for large-scale spatial multiple testing. The STRAW procedure is capable of handling a broad range of spatial settings by leveraging a class of weighted $p$-values and is fully data-driven. Theoretical results show that the proposed method controls the false discovery rate (FDR) at the pre-specified level under some mild conditions. In practice, the local sparsity level, defined as the probability of the null hypothesis being not true, is commonly unknown. To address this issue, we develop a new method for estimating the local sparsity level by employing the kernel-smooth local false discovery rate (Lfdr) statistic. The superior numerical performance of the STRAW procedure is demonstrated by performing extensive simulation studies and a real data analysis.
\end{abstract}

\noindent%
{\it Keywords:}  spatial multiple testing; local false discovery rate; weighted $p$-values.
\vfill

\newpage
\spacingset{1.45} 
\section{Introduction} \label{sec:intro}
\par
The problem of large-scale spatial multiple testing commonly arises from a variety of scientific research fields. For example, in the field of climatology, climate scientists may need to tests tens of thousands of hypotheses to identify the locations where the mean and teleconnection are different between the reconstructed and targeted climate fields (Yun {\it et al.,} 2022). Some other application fields of large-scale spatial multiple testing include functional magnetic resonance imaging (fMRI) data analysis (Tansey {\it et al.,} 2018), genome-wide association studies (GWAS,  Sesia {\it et al.,} 2021), environmental studies (Deng {\it et al.,} 2022), among others. In such applications, tens of thousands of spatially correlated hypothesis tests are conducted simultaneously. It is interesting and challenging to exploit informative spatial patterns to improve both power and interpretability.

\par
To accommodate multiplicity, Benjamini \& Hochberg (1995) proposed a control criterion, called FDR, for multiple testing. The FDR is defined as the expected proportion of false rejections among all rejections and has become widely used in large-scale multiple testing. To date, a considerable amount of multiple testing procedures allowing control of the FDR have been proposed, including the Benjamini-Hochberg (BH) procedure (Benjamini \& Hochberg, 1995), the adaptive $p$-value procedure (Benjamini \& Hochberg, 2000), the $q$-value based procedure (Storey, 2002), the Lfdr procedure (Efron {\it et al.,} 2001), among others. It is important to note that these multiple testing procedures largely ignored spatial structure information among tests and implicitly assumed that the tests are independent. However, such an assumption is rarely established in practice. For example, since the neighbouring SNPs tend to segregate into the same gametes during meiosis, the observations arising from GWAS often exhibit complex serial correlations (Wei {\it et al.,} 2009). As a result, the direct use of the traditional methods for spatial multiple testing may lead to loss of testing efficacy and distortion of scientific findings. On the other hand, the proper use of spatial patterns is expected to improve the accuracy of the tests and the interpretability of the findings.

\par
To date, incorporating spatial structure information into multiple testing has become an active research area. One way of describing spatial structure among tests is to use probabilistic graphical models (PGMs). For example,  Sun \& Cai (2009) assumed that the state sequence of the null hypotheses follows a first-order Markov chain and developed a method based on hidden Markov models (HMMs) for one-dimensional spatial multiple testing. To allow for two-dimensional spatial correlations,  Sun {\it et al.} (2015) proposed a multiple testing procedure based on random fields. For some other multiple testing procedures based on PGMs, see Kuan \& Chiang (2012); Wang {\it et al.} (2019); Wang \& Zhu (2019); Cui {\it et al.} (2021). However, it is necessary to note that these PGM-based multiple testing procedures are sensitive to model misspecification (Yun {\it et al.,} 2022).

\par
Essentially, the idea behind spatial multiple testing procedures is to use spatial structure information to relax the $p$-value threshold for hypotheses that are more likely to be alternatives and tighten the threshold for other hypotheses (Deng {\it et al.,} 2022). In this regard, Tansey {\it et al.} (2018) proposed a FDR smoothing approach that automatically detects spatially localized regions of significant test statistics and subsequently relaxes the significance thresholds within these regions. However, it is worth noting that the FDR smoothing approach requires estimations of both the null and non-null densities, and a poor estimate of these densities can result in less powerful and even invalid FDR procedures. Recently, a number of weighted $p$-values methods that employ auxiliary/side information for multiple testing have been proposed (Ignatiadis {\it et al.,} 2016; Li \& Barber, 2019; Xia {\it et al.,} 2020). To exploit spatial structure information in multiple testing, Cai {\it et al.} (2022) proposed an approach that weights $p$-values based on the local sparsity level. They showed that the proposed approach, called locally adaptive weighting and screening rule (LAWS) hereafter, can adaptively learn the sparse structure without requiring estimates of the null and non-null densities. Similar idea has also been extended to incorporate side information in recent studies, such as the work by Cao {\it et al.} (2022); Zhang \& Chen (2022).

\par
In this paper, we propose an approach, called {\bf STR}ucture-{\bf A}daptive {\bf W}eighting (STRAW) procedure, along the line of the research on weighted $p$-values. The STRAW procedure provides a novel perspective of exploiting spatial structure information by a class of weighted $p$-values and is fully data-driven. The class of the proposed weighted $p$-values contains two parts of information, that is, the
information on the signal strength and the information on the spatial structure. Overall, this paper offers three main contributions. Firstly, a class of general weighted $p$-values which incorporates spatial structure information has been proposed and the weighted $p$-value employed in Cai {\it et al.} (2022) is a special case of our proposal. Secondly, we develop a general method for estimating the local sparsity level by leveraging the kernel-smooth Lfdr statistic. In practice, the kernel functions can be flexibly selected according to the assumption of the spatial structure among tests. Thirdly, we provide theoretical evidence that the STRAW procedure is capable of controlling the FDR at a pre-specified significance level and achieves higher ranking efficacy compared the BH procedure. We further perform extensive simulation studies and a real data analysis to demonstrate the superior numerical performance of the STRAW procedure.

\par
The rest of this paper is organized as follows. Section 2 begins by giving a description of the problem formulation for large-scale spatial multiple testing, followed by a brief introduction to the Lfdr procedure and its equivalent decision rule. In the subsequent part of Section 2, we introduce in detail the oracle STRAW procedure (the local sparsity level is known in advance) and its theoretical properties, the estimation of the local sparsity levels, and the data-driven STRAW procedure (the local sparsity level is estimated by using the kernel-smooth Lfdr statistic). A series of simulation studies and a real data analysis are presented in Sections 3 and 4, respectively. Finally, we summarize this work about the STRAW procedure in Section 5.

\newpage

\section{Statistical Methods}\label{sec:model}

\subsection{Problem Formulation}\label{sec-2.1}

\par
In this section, we shall introduce the problem formulation for large-scale spatial multiple testing. Consider a region in $d$-dimensional Euclidean space, that is, $\mathcal{S}\subset R^d$. Let $\mathbb{S}$ be a regular lattice in the region $\mathcal{S}$ and assume that $\mathbb{S}\rightarrow\mathcal{S}$ in the infill-asymptotics framework (Stein, 1999). Consider simultaneous testing of finite null and non-null hypotheses, namely, $\{H_0(s)\}_{s\in\mathbb{S}}$ and $\{H_1(s)\}_{s\in\mathbb{S}}$, whose locations are located in regular lattice $\mathbb{S}$. Let $\{\theta(s)\}_{s\in\mathbb{S}}$ be a list of underlying states of the null hypothesis, where $\theta(s)=0$ implies that the null hypothesis $H_0(s)$ is true and $\theta(s)=1$ otherwise. Let $\{p(s)\}_{s\in\mathbb{S}}$ be a list of $p$-values with respect to the test located at $s$. The problem of spatial multiple testing can be formulated as:
\[
H_0(s): \theta(s)=0\text{~~v.s.~~} H_1(s): \theta(s)=1, \text{~for~} s\in\mathbb{S}.
\]
In fact, a multiple testing procedure can be expressed as a decision rule $\{\delta(s)\}_{s\in\mathbb{S}}$, where $\delta(s)=0$ implies that the null hypothesis $H_0(s)$ is not rejected by the decision rule $\{\delta(s)\}_{s\in\mathbb{S}}$ and $\delta(s)=1$ otherwise. Then the FDR and the marginal false discovery rate (mFDR) of the decision rule $\{\delta(s)\}_{s\in\mathbb{S}}$ are respectively defined as
\[
\mathrm{FDR} = \mathrm{E} \left[\frac{\sum_{s\in\mathbb{S}}(1-\theta(s))\delta(s)}{\sum_{s\in\mathbb{S}}\delta(s)\vee1}\right] ~\text{and}~ \mathrm{mFDR} = \frac{\mathrm{E}\left[\sum_{s\in\mathbb{S}}(1-\theta(s))\delta(s)\right]}{\mathrm{E}\left[\sum_{s\in\mathbb{S}}\delta(s)\right]}.
\]
In general, the purpose of large-scale spatial multiple testing is to develop a method that is capable of controlling the FDR or the mFDR at the pre-specified level and achieves higher power by leveraging spatial structure information among tests. Due to the lack of consideration of spatial structure information, the direct use of traditional multiple testing procedures may lead to a decrease in testing accuracy and interpretability of findings.

\subsection{Lfdr Procedure and Its Equivalent Decision Rule}\label{sec-2.2}

\par
In this section, we will briefly describe the Lfdr procedure and its equivalent decision rule to clarify the motivation for the approach proposed in this paper. Consider the two-component mixture model proposed by Efron {\it et al.} (2001):
\begin{eqnarray*}
	\theta(s)        &\overset{iid}{\sim}& \text{Bernoulli}(\pi_1), \text{~for~} s\in\mathbb{S},\\
	p(s)\mid\theta(s) &\sim& (1-\theta(s)) f_0 + \theta(s) f_1, \text{~for~} s\in\mathbb{S},
\end{eqnarray*}
where $\pi_1=\Pr(\theta(s)=1)$, for all $s\in\mathbb{S}$, and $f_0$ and $f_1$ are the probability density functions (PDFs) for the null and non-null hypotheses, respectively. The local false discovery rate (Lfdr) statistic is defined as:
\[
\text{Lfdr}(s) = \Pr (\theta(s) = 0 \mid p(s)) = \frac{(1-\pi_1)f_0(p(s))}{(1-\pi_1)f_0(p(s))+\pi_1 f_1(p(s))}, \text{~for~} s\in\mathbb{S}.
\]
Then the Lfdr procedure is executed as follows:
\[
\text{let}~l=\max\left\{j:\frac{1}{j}\sum\limits^j_{i=1}\text{Lfdr}_{(i)}\leq\alpha\right\};~\text{then~reject~all~} H_0{(i)}, \text{~for~} i=1,\cdots,l,\eqno{(1)}
\]
where $\text{Lfdr}_{(i)}$ is the $i$th smallest Lfdr statistic and $H_{0(i)}$ is the corresponding null hypothesis. By a simple derivation, we can obtain a decision rule equivalent to the Lfdr procedure, namely,
\[
\text{~for~} s\in\mathbb{S}, \text{~if~}\frac{f_1(p(s))}{f_0(p(s))} \geq \frac{1-t_{\alpha}}{t_{\alpha}} \cdot \frac{1-\pi_1}{\pi_1}, \text{~then~reject~} H_0(s), \eqno{(2)}
\]
where $t_{\alpha}$ is a cutoff which satisfies $\text{Lfdr}_{(l)}\leq t_{\alpha}<\text{Lfdr}_{(l+1)}$, and $l$ is specified by the Lfdr procedure (1). When the tests are independent, Sun \& Cai (2007) have shown that the Lfdr procedure is optimal in the sense that it controls the mFDR at the pre-specified level with the smallest marginal false non-discovery rate (mFNR).

\subsection{Generalized Weighting Strategy}\label{sec-2.3}

\par
In large-scale spatial multiple testing, since the tests usually exhibit complex spatial correlations, the direct use of the Lfdr procedure may result in loss of testing efficacy. To allow for spatial structure information, the basic idea is to relax the $p$-value threshold for hypotheses where the surrounding signal (i.e. the null hypothesis is not true) is enriched and tighten the threshold for hypotheses where the surrounding signal is sparse.

\par
Consider the decision rule (2) described in Section 2.2. Note that the Lfdr procedure is uniquely determined by the likelihood ratio $f_1(\cdot)/f_0(\cdot)$ and the threshold function $(1-t_{\alpha})/t_{\alpha} \cdot (1-\pi_1)/\pi_1$. In order to enable the Lfdr procedure to accommodate spatial correlations, it is reasonable to assume that the likelihood ratio and the threshold function are location-adaptive, that is, they can be expressed as $f_{1,s}(\cdot)/f_{0,s}(\cdot)$ and ${(1-t^*_{\alpha})}/{t^*_{\alpha}} \cdot {(1-\pi_1(s))}/{\pi_1(s)}$, respectively, where ${f_{1,s}(\cdot)}$ and ${f_{0, s}(\cdot)}$ are location-specific PDFs for the null and the non-null, respectively, $t^*_{\alpha}$ is the cutoff with the FDR controlled at $\alpha$, and $\pi_1(s)=\Pr(\theta(s)=1)$ is the local sparsity level. However, it is extremely difficult to model and estimate the location-adaptive likelihood ratio $f_{1,s}(\cdot)/f_{0,s}(\cdot)$ directly. To bypass this issue, Cai {\it et al.} (2022) suggested to replace $f_{1,s}(p)/f_{0,s}(p)$, for $s\in\mathbb{S}$, by the surrogate function $p^{-1}$. In this paper, we suggest to use a more flexible surrogate function, that is, $p^{-k}$, where $k>0$ is a turning parameter. In the decision rule (2), substituting $p(s)^{-k}$ and ${(1-t^*_{\alpha})}/{t^*_{\alpha}} \cdot {(1-\pi_1(s))}/{\pi_1(s)}$ for $f_1(p(s))/f_0(p(s))$ and $(1-t_{\alpha})/t_{\alpha} \cdot (1-\pi_1)/\pi_1$, respectively. Then we can obtain the location-adaptive decision rule:
\[
\text{~for~} s\in\mathbb{S}, \text{~if~}  \left(\frac{1-\pi_1(s)}{\pi_1(s)}\right)^{\frac{1}{k}} p(s) \leq \left(\frac{t^*_{\alpha}}{1-t^*_{\alpha}}\right)^{\frac{1}{k}}, \text{~then~reject~} H_0(s). \eqno{(3)}
\]
In light of the above considerations, we define a class of weighted $p$-values as:
\[
p_{\text{weighted}}(s, k) = \min\left\{\left(\frac{1-\pi_1(s)}{\pi_1(s)}\right)^{\frac{1}{k}} p(s), 1\right\}, \text{~for~} s\in\mathbb{S}.
\]
\begin{remark}
	This class of weighted $p$-values is simple but thoughtful. Here, we make some notes for it.
	\begin{enumerate}[~~~~~~(a)]
		\item This class of weighted $p$-values provide a unified framework for incorporating spatial correlations. In particular, the weighted $p$-value proposed by Cai {\it et al.} (2022) is a special case with $k=1$, that is, $p_{\text{weighted}}(s, 1)$.
		\item All such weighted $p$-values contain two parts of information, one of which is the information on the strength of the signal from the $p$-value $p(s)$ and the other is the sparsity structure information from the term $\left(\frac{1-\pi_1(s)}{\pi_1(s)}\right)^{1/k}$.
		\item If there exists a small constant $\xi>0$ such that $\pi_1(s)\in[\xi, 1-\xi]$, for all $s\in \mathbb{S}$, then we have that $\lim\limits_{k\rightarrow+\infty} \left(\frac{1-\pi_1(s)}{\pi_1(s)}\right)^{1/k} = 1$. This suggests that when the value of $k$ is large, $p_{\text{weighted}}(s, k)$ will be almost free of spatial structure information. In other words, $p_{\text{weighted}}(s, k)$ is almost the same as the original $p$-value in such a case.
		\item The selection of the turning parameter $k$ is fully data-driven and we will illustrate how to choose it in Section 2.5.
	\end{enumerate}
\end{remark}

\subsection{Threshold Selection}\label{sec-2.4}

\par
In Sections 2.4 and 2.5, it is assumed that the local sparsity levels $\{\pi_1(s)\}_{s\in \mathbb{S}}$ are known and we will describe how to specify $t^*_{\alpha}$ so that the FDR of the location-adaptive decision rule (3) is controlled at $\alpha$. The estimation of $\{\pi_1(s)\}_{s\in \mathbb{S}}$ is deferred to Section 2.6.

\par
Let $\varphi_k(x)=[x/(1-x)]^{1/k}$. Since $\varphi_k(t^*_{\alpha})=[t^*_{\alpha}/(1-t^*_{\alpha})]^{1/k}$ increases strictly monotonically with respect to $t^*_{\alpha}$, it is sufficient to specify $\varphi_k(t^*_{\alpha})$ so that the corresponding FDR is controlled at $\alpha$. The location-adaptive decision rule (3) consists of two steps, where the first step is to order the weighted $p$-values
from the smallest to the largest, and the second step is to select a cutoff so that the FDR is controlled at $\alpha$. Let $p^{(j)}_{\text{weighted}}(k)$ be the $j$th smallest weighted $p$-value among $\{p_{\text{weighted}}(s, k)\}_{s\in\mathbb{S}}$, $p^{(0)}_{\text{weighted}}(k)=0$, and $m=|\mathbb{S}|$. Suppose that $\varphi_k(t^*_{\alpha})$ falls into some interval $[p^{(j)}_{\text{weighted}}(k), p^{(j+1)}_{\text{weighted}}(k))$, for $j=0,\cdots, m-1$. Then the false discovery proportion (FDP) of the location-adaptive decision rule (3) can be expressed as:
\begin{eqnarray*}
	\mathrm{FDP}  &=& \frac{\sum_{s\in\mathbb{S}}(1-\theta(s))I{(p_{\text{weighted}}(s, k)\leq \varphi_k(t^*_{\alpha}))}}{\max\left\{\sum_{s\in\mathbb{S}}I{(p_{\text{weighted}}(s, k)\leq \varphi_k(t^*_{\alpha}))}, 1\right\}} \\
	&=& \frac{\sum_{s\in\mathbb{S}}(1-\theta(s))I{(p_{\text{weighted}}(s, k)\leq \varphi_k(t^*_{\alpha}))}}{\max\left\{j, 1\right\}}.
\end{eqnarray*}
Since the underlying states of the null hypothesis (i.e. $\{\theta(s)\}_{s\in\mathbb{S}}$) are unobserved, the number of false positives (i.e. the numerator $\sum_{s\in\mathbb{S}}(1-\theta(s))I{(p_{\text{weighted}}(s, k)\leq \varphi_k(t^*_{\alpha}))}$) is unknown. We propose to replace it by its expectation, that is, the expected number of false positives (EFP). Specifically, the EFP can be expressed as
\begin{eqnarray*}
	\mathrm{EFP} &=& \mathrm{E} \left[ \sum_{s\in\mathbb{S}}(1-\theta(s))I{(p_{\text{weighted}}(s, k)\leq \varphi_k(t^*_{\alpha}))} \right]\\
	&=& \sum_{s\in\mathbb{S}}\Pr(\theta(s)=0)\Pr(p_{\text{weighted}}(s, k)\leq \varphi_k(t^*_{\alpha}) \mid \theta(s)=0)\\
	&=& \sum_{s\in\mathbb{S}} (1-\pi_1(s))\Pr(p(s)\leq \varphi_k(\pi_1(s))\varphi_k(t^*_{\alpha})\mid \theta(s)=0)\\
	&\leq& \sum_{s\in\mathbb{S}} (1-\pi_1(s)) \varphi_k(\pi_1(s))\varphi_k(t^*_{\alpha}),
\end{eqnarray*}
where the last inequality is due to the fact that $\varphi_k(\pi_1(s))\varphi_k(t^*_{\alpha})$ may be greater than $1$. Then the FDP can be estimated by $j^{-1} \sum_{s\in\mathbb{S}} (1-\pi_1(s)) \varphi_k(\pi_1(s))\varphi_k(t^*_{\alpha})$ and $\varphi_k(t^*_{\alpha})$ is specified by maximizing the number of discoveries within the constraints of the FDP estimates. Specifically,
{\small
	\[
	\text{let~} l_{k}=\max\left\{j: \frac{1}{j}\sum_{s\in\mathbb{S}} (1-\pi_1(s)) \varphi_k(\pi_1(s))p^{(j)}_{\text{weighted}}(k)\leq\alpha\right\}, \text{then specify~} \varphi_k(t^*_{\alpha}) = p^{(l_{k})}_{\text{weighted}}(k).\eqno{(4)}
	\]
}
Similar considerations for specifying the threshold could also be found in Liang {\it et al.} (2022).

\subsection{Oracle STRAW Procedure}\label{sec-2.5}

\par
Note that a specific turning parameter $k$ corresponds to a list of weighted $p$-values, that is, $\{p_{\text{weighted}}(s, k)
\}_{s\in\mathbb{S}}$. The selection of $k=1$ is equivalent to the weighted $p$-value proposed by Cai {\it et al.} (2022), whereas choosing a large value of $k$ amounts to using the non-weighted $p$-value. We suggest to select $k$ with the largest number of discoveries, namely,
\[
\widetilde{k} = \underset{k\in(0, +\infty)}{\arg\max}~l_{k}.\eqno{(5)}
\]
Since the values of $k$ are taken continuously on the interval $(0, +\infty)$, it is extremely difficult to calculate the exact value of $\widetilde{k}$. To address this issue, we adopt the approximation strategy. Note that when the value of $k$ is large, $p_{\text{weighted}}(s, k)$ will be almost free of spatial structure information. On the other hand, when the value of $k$ is small, most of $p_{\text{weighted}}(s, k)$ will tend to $1$. It is sufficient to consider the values of $k$ that are equally spaced on the interval $[B_1, B_2]$, namely, $\{k_i\}^L_{i=0}$, where $B_1>0$, and $B_2$ is a relatively large constant and $k_i=B_1+i(B_2-B_1)/L$, for $i=0,\cdots,L$. Specifically, the turning parameter $k$ is specified as follows.
\begin{enumerate}[(a)]
	\item Let $\{k_i\}^L_{i=0}$ be equally spaced on the interval $[B_1, B_2]$, where $k_i=B_1+i(B_2-B_1)/L$.
	\item For each $k_i$, rank $\{p_{\text{weighted}}(s, k_i)\}_{s\in\mathbb{S}}$ from the smallest to largest, that is,  $p_{\text{weighted}}^{(1)}(k_i), \cdots, \\p_{\text{weighted}}^{(m)}(k_i)$, where $m=|\mathbb{S}|$.
	\item Select $ \widetilde{k} = \underset{k_i, i=0,\cdots, L}{\arg\max}\left\{j: \dfrac{1}{j}\sum_{s\in\mathbb{S}} (1-\pi_1(s)) \varphi_{k_i}(\pi_1(s))p^{(j)}_{\text{weighted}}(k_i)\leq\alpha\right\}$.
\end{enumerate}

\begin{remark}
	It is important to note that a large value of $L$ chosen will result in a large computational burden, while a small choice of $L$ will result in a large deviation of the estimated $\widetilde{k}$ from the true $\widetilde{k}$. In general, we can choose the appropriate $L$ according to the complexity of the simulation and the required accuracy.
\end{remark}

\par
Integrate the above concerns (3)-(5), we can obtain the oracle {\bf STR}ucture-{\bf A}daptive {\bf W}eighting (STRAW) procedure, which is summarized in Algorithm 1.
\begin{algorithm}[htp]
	\begin{enumerate}[1.]\setlength\itemsep{0.2em}
		\item Rank $\{p_{\text{weighted}}(s, \widetilde{k})\}_{s\in\mathbb{S}}$ from smallest to largest.
		\item Let $l_{\widetilde{k}}=\max\left\{j: \dfrac{1}{j}\sum_{s\in\mathbb{S}} (1-\pi_1(s)) \varphi_{\widetilde{k}}(\pi_1(s))p^{(j)}_{\text{weighted}}(\widetilde{k})\leq\alpha\right\}$.
		\item For $s\in\mathbb{S}$, reject the null hypotheses for which $p_{\text{weighted}}(s, \widetilde{k})\leq p^{(l_{\widetilde{k}})}_{\text{weighted}}(\widetilde{k})$.
	\end{enumerate}
	\caption{The oracle STRAW procedure}
	\label{alg:1}
\end{algorithm}

\par
Let $\{s_1, \cdots, s_m\}$ be an  arrangement of $\{s\in\mathbb{S}\}$ and $z_i=\Phi^{-1}(1-p(s_i)/2)$, where $m=|\mathbb{S}|$. Let $\mathbb{S}_{\rho}$ be the set of indices with $|\mathrm{E}(z_i)|\geq (\log m)^{(1+\rho)/2}$, that is,
\[
\mathbb{S}_{\rho} = \left\{i: 1\leq i\leq m, |\mathrm{E}(z_i)|\geq (\log m)^{(1+\rho)/2}\right\}.
\]
Let $F_{1,i}(\cdot)$ be the location-adaptive cumulative distribution function (CDF) under the non-null with respect to $p(s_i)$, that is, $F_{1,i}(t)=\Pr(p(s_i)\leq t\mid \theta(s_i)=1)$. Next, we discuss some assumptions that are needed in deriving the properties of the oracle STRAW procedure.

\begin{enumerate}[(\text{A}1)]
	\item Assume that
	\[
	\sum\limits_{s\in\mathbb{S}}\Pr(\theta(s)=0)\geq cm, \text{~and~} \mathrm{Var}\left(\sum\limits_{s\in\mathbb{S}}I(\theta(s)=0)\right)=O(m^{1+\zeta}),
	\]
	for some constant $c>0$ and some $0\leq\zeta<1$.
	\item Assume that
	\[
	|\mathbb{S}_{\rho}| \geq \left[1/(\pi^{1/2}\alpha)+\delta\right](\log m)^{1/2},
	\]
	for some constant $\rho>0$ and some $\delta>0$, where $\pi$ is the circular constant.
	\item Assume that
	\[
	\sum^m_{i=1}a_i F_{1,i}(t/x_i)\geq F_{1,i}\left( t/\sum^m_{i=1} a_i x_i\right)
	\]
	for any $0< a_i< 1$ such that $\sum^m_{i=1} a_i =1$, and $\min_{1\leq j \leq m} 1/\varphi_{\widetilde{k}}(\pi_1(s_j)) \leq x_i \leq \max_{1\leq j \leq m} 1/\varphi_{\widetilde{k}}(\pi_1(s_j))$.
	\item Assume that
	\[
	\sum_{s\in\mathbb{S}}(1-\pi_{1}(s))\sum_{s\in\mathbb{S}}\pi_{1}(s) \geq \sum_{s\in\mathbb{S}}\left[(1-\pi_{1}(s))^{1-1/\widetilde{k}}\pi_{1}(s)^{1/\widetilde{k}}\right]
	\sum_{s\in\mathbb{S}}\left[(1-\pi_{1}(s))^{1/\widetilde{k}}\pi_{1}(s)^{1-1/\widetilde{k}}\right].
	\]
\end{enumerate}

\begin{remark}
	Assumption (A1) is fairly mild, as it only requires that the underlying states $\{\theta(s)\}_{s\in\mathbb{S}}$ are not completely correlated. Assumption (A2) assume that there exist a small number of locations with $\mathrm{E}(z_i)|\geq (\log m)^{(1+\rho)/2}$ for some $\rho>0$. When the CDFs under the non-null are homogeneous, namely, $F_{1,i}(t)=F_{1}(t)$, Assumption (A3) reduces to the assumption that $x\mapsto F_{1}(t/x)$ is a convex function. Such an assumption is satisfied by the CDFs employed in Hu {\it et al.} (2010). If $\widetilde{k}=2$, then one can prove that Assumption (A4) is valid by using Cauchy-Schwartz inequality. For the general case, Assumption (A4) can be verified by using the estimation of $\{\pi_1(s)\}_{s\in\mathbb{S}}$.
\end{remark}

\par
Let $\boldsymbol{\delta}^{\widetilde{k}} = \left\{I(p_{\text{weighted}}(s, \widetilde{k})\leq p^{(l_{\widetilde{k}})}_{\text{weighted}}(\widetilde{k})), s\in\mathbb{S}\right\}$ be the decision rule corresponding to the oracle STRAW procedure. The next theorem shows that $\boldsymbol{\delta}^{\widetilde{k}}$ is capable of controlling both the FDR and FDP at level $\alpha$ under some mild conditions.
\begin{theorem}
	Consider the oracle STRAW procedure and assume that Assumptions (A1) and (A2) hold, then we have
	\[
	\underset{m\rightarrow\infty}{\lim\sup}~\mathrm{FDR}(\boldsymbol{\delta}^{\widetilde{k}}) \leq \alpha, \text{~and~} \lim\limits_{m\rightarrow\infty} \Pr (\mathrm{FDP}(\boldsymbol{\delta}^{\widetilde{k}}) \leq \alpha+\varepsilon) = 1,
	\]
	for any $\varepsilon>0$.
\end{theorem}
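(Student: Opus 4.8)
The plan is to prove the high-probability FDP statement first and then obtain the FDR bound from it by boundedness. For the selected $\widetilde{k}$, write $\hat{t}=\varphi_{\widetilde{k}}(t^*_\alpha)=p^{(l_{\widetilde{k}})}_{\text{weighted}}(\widetilde{k})$ for the data-driven threshold, and introduce, for each $k$ and $t\ge 0$, the false-rejection count $V_k(t)=\sum_{s\in\mathbb{S}}(1-\theta(s))I(p_{\text{weighted}}(s,k)\le t)$, the total-rejection count $R_k(t)=\sum_{s\in\mathbb{S}}I(p_{\text{weighted}}(s,k)\le t)$, and the EFP surrogate $W_k(t)=t\sum_{s\in\mathbb{S}}(1-\pi_1(s))\varphi_k(\pi_1(s))$, which is linear in $t$. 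By construction $R_{\widetilde{k}}(\hat{t})=l_{\widetilde{k}}$, and the stopping rule (4) gives $W_{\widetilde{k}}(\hat{t})/l_{\widetilde{k}}\le\alpha$. Therefore
\[
\mathrm{FDP}(\boldsymbol{\delta}^{\widetilde{k}})
=\frac{V_{\widetilde{k}}(\hat{t})}{\max\{l_{\widetilde{k}},1\}}
=\frac{V_{\widetilde{k}}(\hat{t})}{W_{\widetilde{k}}(\hat{t})}\cdot\frac{W_{\widetilde{k}}(\hat{t})}{l_{\widetilde{k}}}
\le\alpha\,\frac{V_{\widetilde{k}}(\hat{t})}{W_{\widetilde{k}}(\hat{t})},
\]
so it suffices to show $V_k(t)/W_k(t)\le 1+o_P(1)$ uniformly over the relevant range of $(k,t)$.

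The core is therefore a ratio-concentration argument for the false-rejection process. As in the EFP computation of Section 2.4, super-uniformity of the null $p$-values gives $\mathrm{E}[V_k(t)]=\sum_{s}(1-\pi_1(s))F_{0,s}(\varphi_k(\pi_1(s))t)\le W_k(t)$, so I only need $V_k(t)-\mathrm{E}[V_k(t)]=o_P(W_k(t))$. I would bound $\mathrm{Var}(V_k(t))$ by the law of total variance, conditioning on the state field $\{\theta(s)\}$. The between-states term $\mathrm{Var}(\mathrm{E}[V_k(t)\mid\boldsymbol{\theta}])$ is a weighted sum of $\{I(\theta(s)=0)\}$ with weights of order $t$; since at the operative threshold $t$ is of order $l_{\widetilde{k}}/m$, Assumption (A1) (in particular $\zeta<1$) forces this term to be negligible. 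The within-states term $\mathrm{E}[\mathrm{Var}(V_k(t)\mid\boldsymbol{\theta})]$ is of Poisson type and of order $W_k(t)$ under the weak dependence of the $p$-value field. Hence $\mathrm{Var}(V_k(t))=O(W_k(t))+o(1)$, and a Chebyshev bound gives $|V_k(t)-\mathrm{E}[V_k(t)]|/W_k(t)=O_P(W_k(t)^{-1/2})$ whenever $W_k(t)\to\infty$. A union bound over the finite grid $\{k_i\}_{i=0}^L$ upgrades this to a uniform statement over all candidate values of $k$, which in turn covers the random choice $\widetilde{k}$.

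The remaining and most delicate ingredient is to ensure that the denominator does not collapse, i.e. that $W_{\widetilde{k}}(\hat{t})\to\infty$, which is exactly what Assumption (A2) supplies. The $|\mathbb{S}_\rho|$ locations with $|\mathrm{E}(z_i)|\ge(\log m)^{(1+\rho)/2}$ produce $p$-values, and hence weighted $p$-values, that are smaller than any fixed polynomial threshold with overwhelming probability; a Gaussian tail estimate shows that at the threshold induced by these strong signals the empirical FDP in (4) stays below $\alpha$, so the step-up rule necessarily retains them. The calibration constant $1/(\pi^{1/2}\alpha)+\delta$ in (A2) is precisely what makes this comparison strict, forcing $l_{\widetilde{k}}\ge|\mathbb{S}_\rho|\to\infty$ and a lower cut-off $\hat{t}\ge t_0$ on the selected threshold. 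Combining this with the uniform ratio bound of the previous paragraph yields $V_{\widetilde{k}}(\hat{t})/W_{\widetilde{k}}(\hat{t})\to 1$ in probability, and the display in the first paragraph gives $\Pr(\mathrm{FDP}(\boldsymbol{\delta}^{\widetilde{k}})\le\alpha+\varepsilon)\to 1$ for every $\varepsilon>0$. Finally, because $\mathrm{FDP}\in[0,1]$, one has $\mathrm{FDR}=\mathrm{E}[\mathrm{FDP}]\le(\alpha+\varepsilon)\Pr(\mathrm{FDP}\le\alpha+\varepsilon)+\Pr(\mathrm{FDP}>\alpha+\varepsilon)$, and letting $m\to\infty$ and then $\varepsilon\downarrow 0$ gives $\limsup_{m\to\infty}\mathrm{FDR}(\boldsymbol{\delta}^{\widetilde{k}})\le\alpha$. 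I expect the main obstacle to be the within-states Poisson-type variance bound, since it is where the (otherwise unstated) weak dependence of the $p$-value field must be invoked and combined with the slow, only-logarithmic growth of $W_{\widetilde{k}}(\hat{t})$ guaranteed by (A2).
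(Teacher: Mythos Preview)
Your overall architecture is sound and matches the paper's at the highest level: both reduce the problem to a uniform ratio statement of the form $V_k(t)/W_k(t)\le 1+o_P(1)$ and use (A2) to guarantee that the selected threshold does not collapse. The decomposition $\mathrm{FDP}\le\alpha\cdot V_{\widetilde{k}}(\hat t)/W_{\widetilde{k}}(\hat t)$ is exactly the hinge the paper uses, and your final passage from FDP to FDR via boundedness is the standard one.

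Where you diverge is in how the ratio statement is obtained. The paper does \emph{not} attack $V_k/W_k$ directly by a law-of-total-variance/Chebyshev argument. Instead it introduces an auxiliary procedure (``Procedure~1''): apply ordinary BH to a second set of rescaled weighted $p$-values $p^*(s,\widetilde{k})=p(s)/\omega^*(s,\widetilde{k})$ whose weights sum to $m$. Two equivalence lemmas recast both STRAW and Procedure~1 as continuous-threshold rules, and then the heavy lifting is outsourced: Genovese \emph{et al.}\ (2006) gives the FDR bound for Procedure~1 in the independent case, and Xia \emph{et al.}\ (2019) supplies the uniform convergence of the null empirical process in the dependent case. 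The comparison $t^1_{\widetilde{k}}\le\hat t_{\widetilde{k}}$ between the two procedures' thresholds then transfers the ratio convergence from Procedure~1 back to STRAW. Your route is more self-contained and avoids the auxiliary construction, but the price is that you must establish the within-states variance bound under dependence yourself, which is precisely the content of the cited Xia \emph{et al.}\ argument; you rightly flag this as the main obstacle. One advantage of your write-up is that you make the finite-grid union bound over $k\in\{k_i\}$ explicit, whereas the paper's proof is written as if $\widetilde{k}$ were fixed and leaves that step implicit.

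A small technical caution on your between-states term: Assumption (A1) bounds $\mathrm{Var}\bigl(\sum_s I(\theta(s)=0)\bigr)$, not $\sum_{s,s'}|\mathrm{Cov}(\theta(s),\theta(s'))|$, so to control $\mathrm{Var}\bigl(\sum_s q_s(t)I(\theta(s)=0)\bigr)$ you cannot simply pull out $(\max_s q_s(t))^2$; you need either nonnegative covariances or the kind of argument the paper uses in its Step~2, where the ratio is bounded by $O(m^{\zeta-1})$ after normalizing by $\bigl(\sum_s\Pr(p_{\text{weighted}}\le t,\theta=0)\bigr)^2$. This is fixable but should be made precise.
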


\par
Let $\omega_{\widetilde{k}} = \left[\frac{\sum_{s\in\mathbb{S}}(1-\pi_1(s))}{\sum_{s\in\mathbb{S}}\varphi_{\widetilde{k}}(\pi_1(s))(1-\pi_1(s))}\right]$ and $p_{\text{rescaled}}(s, \widetilde{k}) = p_{\text{weighted}}(s, \widetilde{k})/\omega_{\widetilde{k}}$ be the rescaled weighted $p$-values. It is clear that $\{p_{\text{rescaled}}(s, \widetilde{k})\}_{s\in\mathbb{S}}$ and $\{p_{\text{weighted}}(s, \widetilde{k})\}_{s\in\mathbb{S}}$ share the same ranking order. Let $\boldsymbol{\delta}_{\text{rescaled}}^{\widetilde{k}}(t) = \left\{I(p_{\text{rescaled}}(s, \widetilde{k})\leq t), s\in\mathbb{S}\right\}$ and $\boldsymbol{\delta}(t) = \left\{I(p(s)\leq t), s\in\mathbb{S} \right\}$ be the decision rules based on $\{p_{\text{rescaled}}(s, \widetilde{k})\}_{s\in\mathbb{S}}$ and $\{p(s)\}_{s\in\mathbb{S}}$, respectively. Then the mFDR of $\boldsymbol{\delta}_{\text{rescaled}}^{\widetilde{k}}(t)$ and $\boldsymbol{\delta}(t)$ can be expressed as:
\[
\mathrm{mFDR} (\boldsymbol{\delta}_{\text{rescaled}}^{\widetilde{k}}(t)) = \dfrac{\sum_{i=1}^m (1-\pi_1(s_i))t}{\sum_{i=1}^m (1-\pi_1(s_i))t + \sum_{i=1}^m \pi_1(s_i) F_{1, i}(\varphi_{\widetilde{k}}(\pi_1(s_i))\omega_{\widetilde{k}}t)},
\]
and
\[
\mathrm{mFDR} (\boldsymbol{\delta}(t)) = \dfrac{\sum_{i=1}^m (1-\pi_1(s_i)) t}{\sum_{i=1}^m (1-\pi_1(s_i)) t + \sum_{i=1}^m \pi_1(s_i) F_{1, i} (t)},
\]
respectively. Define the oracle thresholds $t^{\widetilde{k}}_{\text{OR}} = \sup\{t: \mathrm{mFDR} (\boldsymbol{\delta}_{\text{rescaled}}^{\widetilde{k}}(t)) \leq \alpha\}$ and $t^{1}_{\text{OR}} = \sup\{t: \mathrm{mFDR} (\boldsymbol{\delta}(t)) \leq \alpha\}$, respectively. The expected number of true positives (ETP) of $\boldsymbol{\delta}_{\text{rescaled}}^{\widetilde{k}}(t)$ and $\boldsymbol{\delta}(t)$ are defined as:
\[
\mathrm{ETP} (\boldsymbol{\delta}_{\text{rescaled}}^{\widetilde{k}}(t)) = \sum_{i=1}^m \pi_1(s_i) F_{1, i}(\varphi_{\widetilde{k}}(\pi_1(s_i))\omega_{\widetilde{k}}t),
\]
and
\[
\mathrm{ETP} (\boldsymbol{\delta}(t)) = \sum_{i=1}^m \pi_1(s_i) F_{1, i} (t),
\]
respectively. The next theorem shows that the weighting strategy provided by the rescaled weighted $p$-values achieves higher ranking efficiency compared with the unweighted $p$-values.

\begin{theorem}
	Consider the oracle STRAW procedure and assume that Assumptions (A3) and (A4) hold, then we have
	\begin{eqnarray*}
		&\mathrm{(1)}& \mathrm{mFDR} (\boldsymbol{\delta}_{\text{rescaled}}^{\widetilde{k}}(t^{1}_{\text{OR}})) \leq \mathrm{mFDR} (\boldsymbol{\delta}(t^{1}_{\text{OR}})) \leq \alpha,\\
		&\mathrm{(2)}& \mathrm{ETP} (\boldsymbol{\delta}_{\text{rescaled}}^{\widetilde{k}}(t^{\widetilde{k}}_{\text{OR}})) \geq \mathrm{ETP} (\boldsymbol{\delta}_{\text{rescaled}}(t^{1}_{\text{OR}})) \geq \mathrm{ETP} (\boldsymbol{\delta}(t^{1}_{\text{OR}})).
	\end{eqnarray*}
\end{theorem}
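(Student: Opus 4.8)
The plan is to reduce both parts of the theorem to a single inequality: at any common threshold $t$, the rescaled weighted rule dominates the unweighted rule in expected true positives,
\[
\sum_{i=1}^m \pi_1(s_i) F_{1,i}\bigl(\varphi_{\widetilde{k}}(\pi_1(s_i))\omega_{\widetilde{k}} t\bigr) \;\geq\; \sum_{i=1}^m \pi_1(s_i) F_{1,i}(t), \qquad (\star)
\]
that is, $\mathrm{ETP}(\boldsymbol{\delta}_{\text{rescaled}}^{\widetilde{k}}(t)) \geq \mathrm{ETP}(\boldsymbol{\delta}(t))$ for every $t$. Before attacking $(\star)$ I would record the accounting identity forced by the definition of $\omega_{\widetilde{k}}$: the expected number of false positives of $\boldsymbol{\delta}_{\text{rescaled}}^{\widetilde{k}}(t)$ equals $\omega_{\widetilde{k}} t \sum_{s\in\mathbb{S}}(1-\pi_1(s))\varphi_{\widetilde{k}}(\pi_1(s)) = t\sum_{s\in\mathbb{S}}(1-\pi_1(s))$, which is exactly the false-positive count of $\boldsymbol{\delta}(t)$. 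Thus the two displayed mFDR expressions share the common numerator $t\sum_{i}(1-\pi_1(s_i))$ and differ only through their ETP terms in the denominator. Since $x\mapsto N/(N+x)$ is strictly decreasing for fixed $N>0$, inequality $(\star)$ gives at once the first inequality of part (1); the second inequality $\mathrm{mFDR}(\boldsymbol{\delta}(t^{1}_{\text{OR}}))\leq\alpha$ is just the definition of $t^{1}_{\text{OR}}=\sup\{t:\mathrm{mFDR}(\boldsymbol{\delta}(t))\leq\alpha\}$, invoking continuity of $t\mapsto\mathrm{mFDR}(\boldsymbol{\delta}(t))$ so that the supremum is feasible.

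For part (2) I would first establish the threshold ordering $t^{\widetilde{k}}_{\text{OR}}\geq t^{1}_{\text{OR}}$. By part (1) we already have $\mathrm{mFDR}(\boldsymbol{\delta}_{\text{rescaled}}^{\widetilde{k}}(t^{1}_{\text{OR}}))\leq\alpha$, so $t^{1}_{\text{OR}}$ belongs to the feasible set defining $t^{\widetilde{k}}_{\text{OR}}=\sup\{t:\mathrm{mFDR}(\boldsymbol{\delta}_{\text{rescaled}}^{\widetilde{k}}(t))\leq\alpha\}$, whence $t^{\widetilde{k}}_{\text{OR}}\geq t^{1}_{\text{OR}}$. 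Because each $F_{1,i}$ is a nondecreasing CDF and $\varphi_{\widetilde{k}},\omega_{\widetilde{k}}>0$, the map $t\mapsto\mathrm{ETP}(\boldsymbol{\delta}_{\text{rescaled}}^{\widetilde{k}}(t))$ is nondecreasing, so this ordering delivers the first inequality of part (2); the second inequality is precisely $(\star)$ evaluated at $t=t^{1}_{\text{OR}}$.

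It remains to prove $(\star)$, which I expect to be the main obstacle and the one place where Assumptions (A3)-(A4) are genuinely used. I would set $a_i=\pi_1(s_i)/\sum_{j}\pi_1(s_j)$ (so $\sum_i a_i=1$), $x_i=1/\varphi_{\widetilde{k}}(\pi_1(s_i))$ (which lie in the admissible range required by (A3), being among the values $1/\varphi_{\widetilde{k}}(\pi_1(s_j))$), and $\tilde{t}=\omega_{\widetilde{k}}t$, so that $\varphi_{\widetilde{k}}(\pi_1(s_i))\omega_{\widetilde{k}}t=\tilde{t}/x_i$. Dividing $(\star)$ by $\sum_{j}\pi_1(s_j)$, the left-hand side becomes $\sum_i a_i F_{1,i}(\tilde{t}/x_i)$, and (A3) -- a generalized Jensen/convexity bound that in the homogeneous case $F_{1,i}=F_1$ is exactly convexity of $x\mapsto F_1(t/x)$ -- bounds this from below by $F_{1}\bigl(\tilde{t}/\sum_i a_i x_i\bigr)$. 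It then suffices to show $\tilde{t}/\sum_i a_i x_i\geq t$, i.e. $\omega_{\widetilde{k}}\geq\sum_i a_i x_i$, and monotonicity of the CDF finishes the argument. Rearranging $\omega_{\widetilde{k}}\geq\sum_i a_i x_i$ gives $\omega_{\widetilde{k}}\sum_{s\in\mathbb{S}}\pi_1(s)\geq\sum_{s\in\mathbb{S}}\pi_1(s)/\varphi_{\widetilde{k}}(\pi_1(s))$, which, after substituting $\varphi_{\widetilde{k}}(\pi_1(s))=[\pi_1(s)/(1-\pi_1(s))]^{1/\widetilde{k}}$ and the definition of $\omega_{\widetilde{k}}$, is exactly Assumption (A4). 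The delicate points I would monitor are (i) the heterogeneous case, where the free index on the right-hand side of (A3) must be read consistently with the majorization it encodes, and (ii) confirming that the truncation $\min\{\cdot,1\}$ in the weighted $p$-values is inactive at the relevant oracle thresholds, so the false-positive accounting above is exact.
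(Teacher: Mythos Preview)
Your proposal is correct and follows essentially the same route as the paper: reduce both parts to the ETP inequality $(\star)$, apply Assumption~(A3) with $a_i=\pi_1(s_i)/\sum_j\pi_1(s_j)$ and $x_i=1/\varphi_{\widetilde{k}}(\pi_1(s_i))$, and then check that the resulting condition $\omega_{\widetilde{k}}\geq\sum_i a_i x_i$ is precisely Assumption~(A4); part~(2) then follows from the threshold ordering $t^{\widetilde{k}}_{\text{OR}}\geq t^{1}_{\text{OR}}$ and monotonicity of $F_{1,i}$. Your caveat~(i) about the free index on the right of~(A3) is well taken---the paper's own proof retains a sum $\sum_i\pi_1(s_i)F_{1,i}(\cdot)$ after applying~(A3), so in the heterogeneous case you should read~(A3) as yielding a bound that still carries the averaging over~$i$ rather than collapsing to a single $F_1$.
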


\par
The detailed proofs of Theorems 1 and 2 are given in Appendix.

\subsection{Estimation of Local Sparsity Levels}

\par
In practice, the local sparsity levels $\{\pi_1(s)\}_{s\in \mathbb{S}}$ are typically unknown. Since we have only one observation at each location, it is not an easy task to estimate $\{\pi_1(s)\}_{s\in \mathbb{S}}$ directly. In this section, we introduce a nonparametric approach to estimate the local sparsity levels.

\par
To leverage spatial information from nearby locations, we employ the so-called Nadaraya-Watson kernel-weighted average to assign weights to $1-\widehat{\text{Lfdr}}(s^{'})$ on the basis of their distance to location $s$, where $\widehat{\text{Lfdr}}(s^{'})$ is the estimated Lfdr statistic at location $s^{'}$. Specifically, the local sparsity level $\pi_1(s)$ is estimated by
\[
\widehat{\pi}_1(s) = \dfrac{\sum_{s^{'}\in\mathbb{S}}K_h(s-s^{'})I(|s-s^{'}|<c)(1-\widehat{\text{Lfdr}}(s^{'}))}{\sum_{s^{'}\in\mathbb{S}}K_h(s-s^{'})I(|s-s^{'}|<c)},
\]
where $K_h(\cdot)$ is the Gaussian kernel function, $h$ is the bandwidth, and $c$ is the truncation. Note that $1-\text{Lfdr}(s^{'})=\Pr(\theta(s^{'})=1\mid p(s^{'}))$, it is clear that the estimate of $\pi_1(s)$ can be viewed as an weighted average of $1-\widehat{\text{Lfdr}}(s^{'})$, where the weight decreases with the distance from $s^{'}$ to $s$. Such a similar consideration could also be found in (G$\ddot{\text{o}}$lz {\it et al.,} 2022).

\par
It is necessary to note that the basic idea for estimating the local sparsity levels is to assign weights that decay smoothly with the distance from the target location. In fact, the choice of the kernel function is not limited to Gaussian kernels. For example, one can choose the Epanechnikov quadratic kernel function which is defined as
$$ K_{\lambda}(s, s^{'})=
\begin{cases}
	\dfrac{3}{4}\left(1-\dfrac{|s-s^{'}|^2}{\lambda^2}\right),~& \text{if~} |s-s^{'}|/\lambda\leq 1,\\
	0,~& \text{otherwise.~}
\end{cases}$$
In the following simulation studies, we will fully verify the feasibility of choosing a Gaussian kernel. Although the choice of different kernel functions may affect the accuracy of the estimation, this aspect is beyond the scope of this paper.

\subsection{Data-Driven STRAW Procedure}

\par
Based on the estimate of the local sparsity levels, the estimated weighted $p$-values can be calculated by
\[
\widehat{p}_{\text{weighted}}(s, k) = \min\left\{\left(\frac{1-\widehat{\pi}_1(s)}{\widehat{\pi}_1(s)}\right)^{\frac{1}{k}} p(s), 1\right\}, \text{~for~} s\in\mathbb{S}.
\]
Then the turning parameter $k$ is estimated by
\[
\widehat{k} = \underset{k_i, i=0,\cdots, L}{\arg\max}\left\{j: \frac{1}{j}\sum_{s\in\mathbb{S}} (1-\widehat{\pi}_1(s)) \varphi_{k_i}(\widehat{\pi}_1(s))\widehat{p}^{(j)}_{\text{weighted}}(k_i)\leq\alpha\right\},
\]
where $\widehat{p}^{(j)}_{\text{weighted}}(k)$ is the $j$th smallest estimated weighted $p$-value among $\{\widehat{p}_{\text{weighted}}(s, k)\}_{s\in\mathbb{S}}$, and $k_i=B_1+i(B_2-B_1)/L$. The data-driven STRAW procedure proceeds as follows.
\begin{algorithm}[h]
	\begin{enumerate}[1.]\setlength\itemsep{0.2em}
		\item Rank $\{\widehat{p}_{\text{weighted}}(s, \widehat{k})\}_{s\in\mathbb{S}}$ from smallest to largest.
		\item Let $l_{\widehat{k}}=\max\left\{j: \dfrac{1}{j}\sum_{s\in\mathbb{S}} (1-\widehat{\pi}_1(s)) \varphi_{\widehat{k}}(\widehat{\pi}_1(s))\widehat{p}^{(j)}_{\text{weighted}}(\widehat{k})\leq\alpha\right\}$.
		\item For $s\in\mathbb{S}$, reject the null hypotheses for which $\widehat{p}_{\text{weighted}}(s, \widehat{k})\leq \widehat{p}^{(l_{\widehat{k}})}_{\text{weighted}}(\widehat{k})$.
	\end{enumerate}
	\caption{The data-driven STRAW procedure}
	\label{alg:2}
\end{algorithm}

\section{Simulation Studies}
\label{sec-3}

\par
In this section, we carry out comprehensive simulation studies to investigate the numerical performance of the STRAW procedure. According to the dimension of the simulated data, the simulations are divided into two parts: the one-dimensional setting and the two-dimensional setting. We compare the oracle and data-driven STRAW procedures (STRAW.or and STRAW.dd) with some state-of-the-art multiple testing procedures:  (1) the BH procedure (BH, Benjamini \& Hochberg, 1995); and (2) the oracle and data-driven LAWS procedures  (LAWS.or and LAWS.dd, Cai {\it et al.,} 2022). An R package, termed as STRAW, is developed for implementing the STRAW procedure and it is provided in https://github.com/wpf19890429/STRAW. All simulation results are based on $100$ repetitions and the FDR level is fixed at $0.1$. Without loss of generality, we choose $B_1=0.5$, $B_2=5$ and $L=18$.

\subsection{Simulation Studies in One-Dimensional Settings}\label{sec-3.1}

\par
In this section, the simulated data are generated from one-dimensional settings. Specifically, consider simultaneous testing of $m=5000$ null hypotheses located at a region of one-dimensional Euclidean space. The local sparsity levels ($\{\pi_1(s)\}^m_{s=1}$) are set to
\begin{eqnarray*}
	\pi_1(s) &=& \pi_{1d}, \text{~for~} s\in\{1001, 1002, \cdots, 1200\}\cup\{2001, 2002, \cdots, 2200\}\cup\\
	&&~~~~~~~~~~~~~~~~\{3001, 3002, \cdots, 3200\}\cup\{4001, 4002, \cdots, 4200\};\\
	\pi_1(s) &=& 0.01, \text{~for~other~locations}.
\end{eqnarray*}
The underlying states of the null hypotheses ($\{\theta(s)\}^m_{s=1}$) and the corresponding summary statistics ($\{X(s)\}^m_{s=1}$) are generated from the following models:
\begin{eqnarray*}
	\theta(s)          &\sim& \text{Bernoulli}(\pi_1(s)), \text{~for~} s=1,\cdots,m,\\
	X(s)\mid\theta(s)  &\sim& (1-\theta(s))N(0, 1) + \theta(s)N(\mu, 1), \text{~for~} s=1,\cdots,m.
\end{eqnarray*}
Then the $p$-values with respect to $\{X(s)\}^m_{s=1}$ can be calculated by using
\[
p(s) = 2 \left(1-\Phi(|X(s)|)\right),
\]
where $\Phi(\cdot)$ is the CDF of the standard normal distribution.

\par
The simulations are conducted in the following two Scenarios.

\par
{\bf Scenario 1:} fix $\pi_{1d}=0.6$ and change $\mu$ from $1.5$ to $2.0$.

\par
{\bf Scenario 2:} fix $\mu=2.0$ and change $\pi_{1d}$ from $0.4$ to $0.6$.

\par
The simulation results are presented in Figure 1. Panels (a) and (c) of Figure 1 demonstrate the following findings:
\begin{enumerate}
	\item The FDR values of all the comparison methods can be effectively controlled below $0.1$.
	\item Both the oracle STRAW procedure and the oracle LAWS procedure exhibit better FDR control in the vicinity of $0.1$.
	\item The BH procedure, which does not consider spatial information, displays a somewhat conservative behavior.
\end{enumerate}
Moreover, both the data-driven STRAW and the data-driven LAWS procedures yield FDR values lower than $0.1$ due to some level of conservatism in estimating ${\pi_1(s)}^m_{s=1}$. From Panels (b) and (d) of Figure 1, we can observe that the oracle STRAW procedure has the largest average number of true positives (ATP) value, followed by the oracle LAWS procedure, the data-driven STRAW procedure, the data-driven LAWS procedure, and the BH procedure. This indicates that the proposed STRAW procedure can indeed improve the efficacy of spatial multiple testing in one-dimensional setting. Moreover, the larger the value of $\mu$, the stronger the signal, so the ATP values of each method increase as $\mu$ increases. Note also that the larger the value of $\pi_{1d}$, the more signals there are. It is easy to understand that the ATP value of each method increases as $\pi_{1d}$ increases. We also conduct simulations to evaluate the estimation of the local sparsity level via the kernel-smooth Lfdr statistic. The corresponding simulation results are displayed in Appendix.

\begin{figure}[htbp]
	\centering
	\begin{subfigure}{0.48\textwidth}
		\includegraphics[width=3in]{./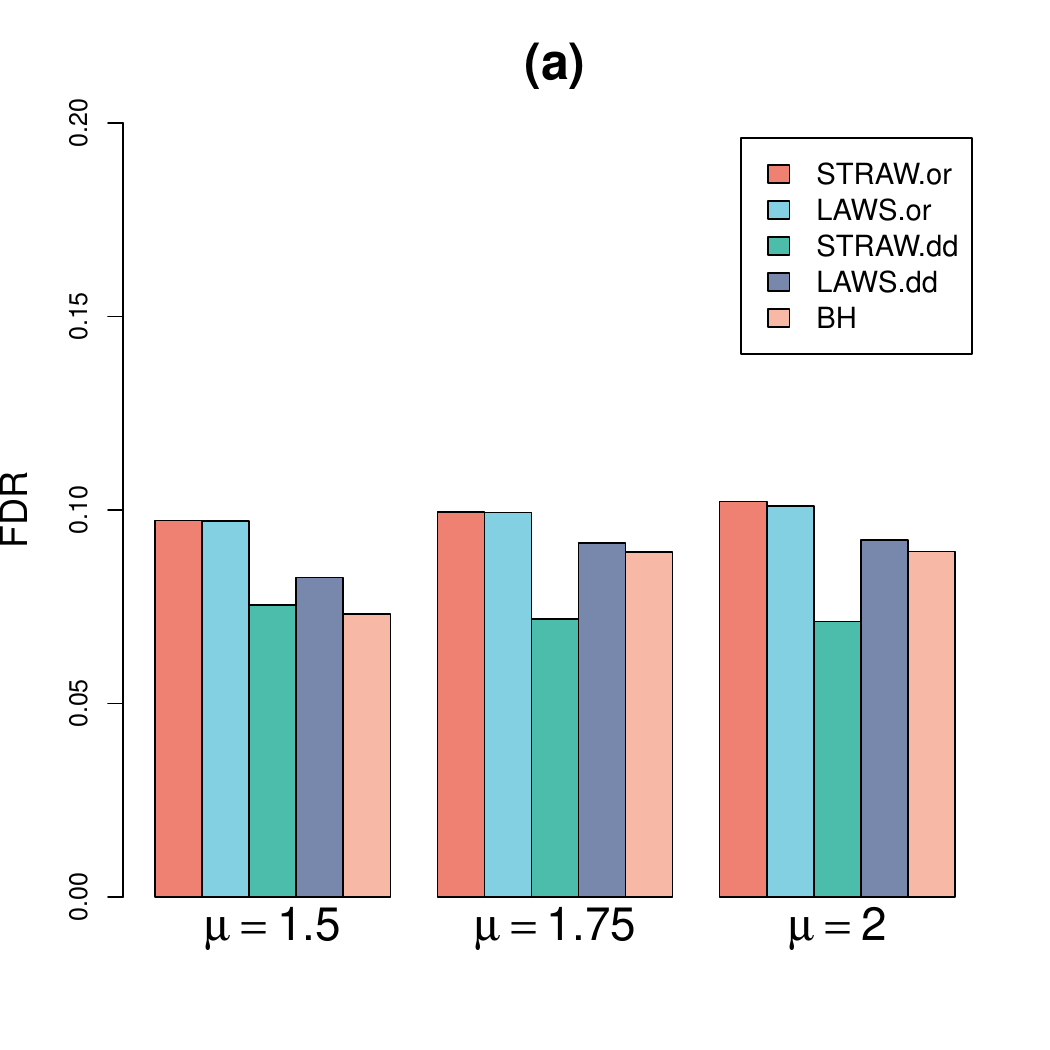}
	\end{subfigure}
\begin{subfigure}{0.48\textwidth}
	\includegraphics[width=3in]{./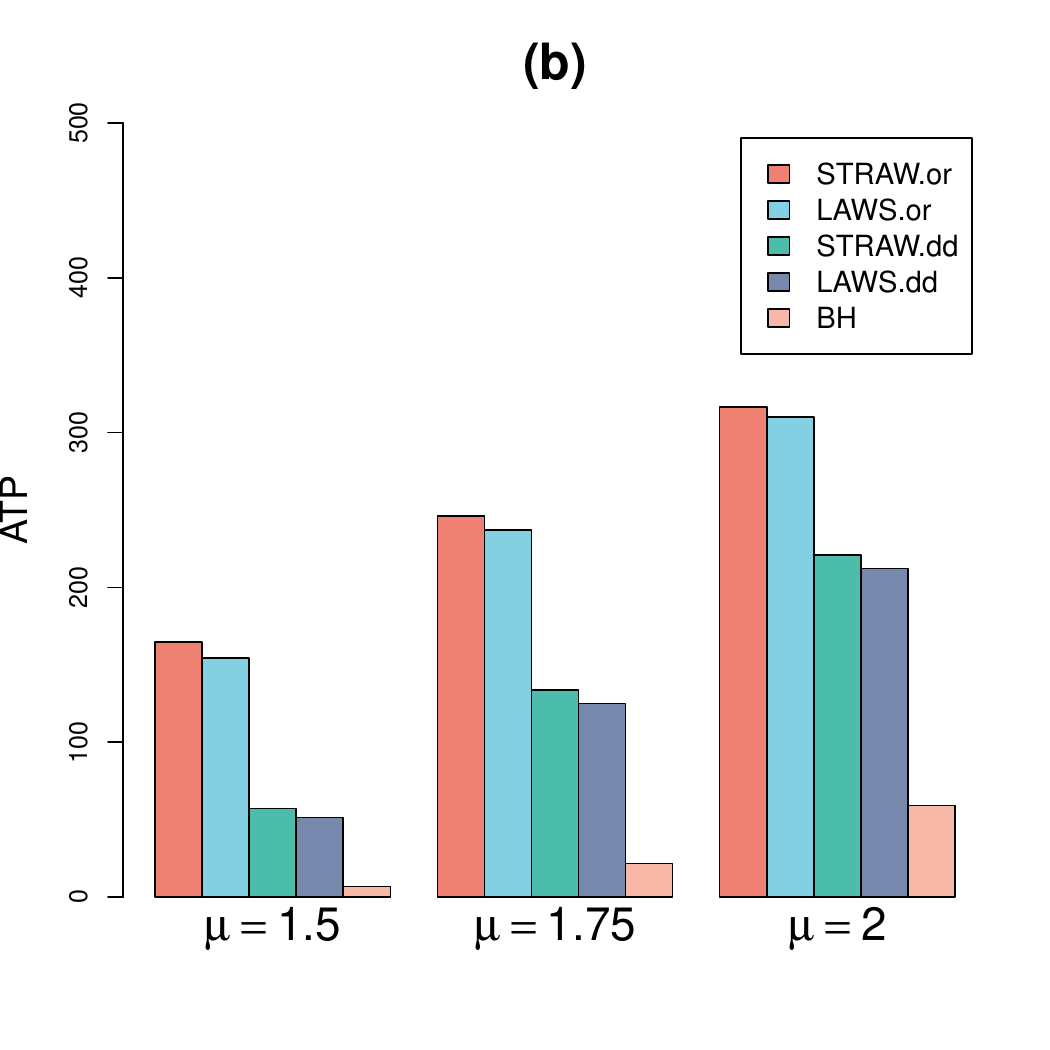}
\end{subfigure}\\
\begin{subfigure}{0.48\textwidth}
	\includegraphics[width=3in]{./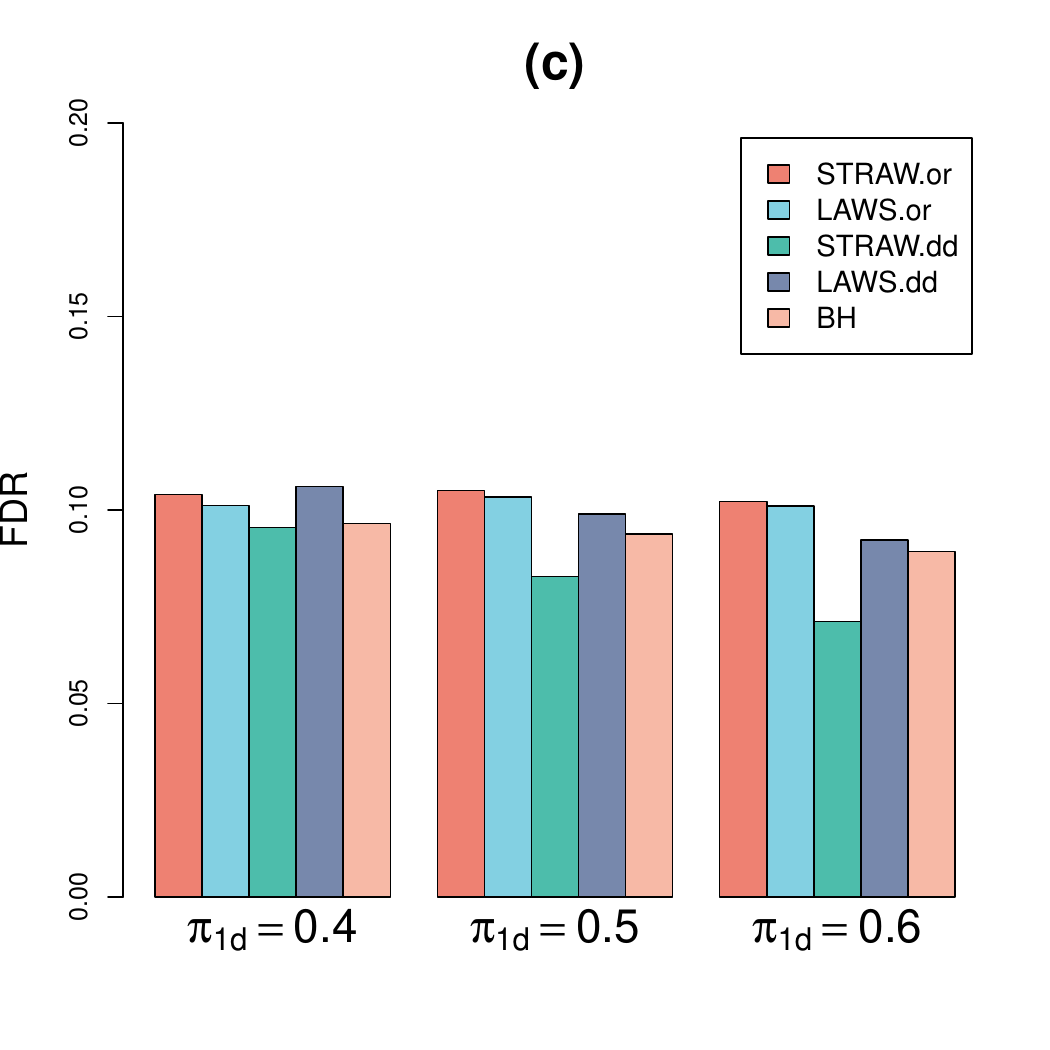}
\end{subfigure}
\begin{subfigure}{0.48\textwidth}
	\includegraphics[width=3in]{./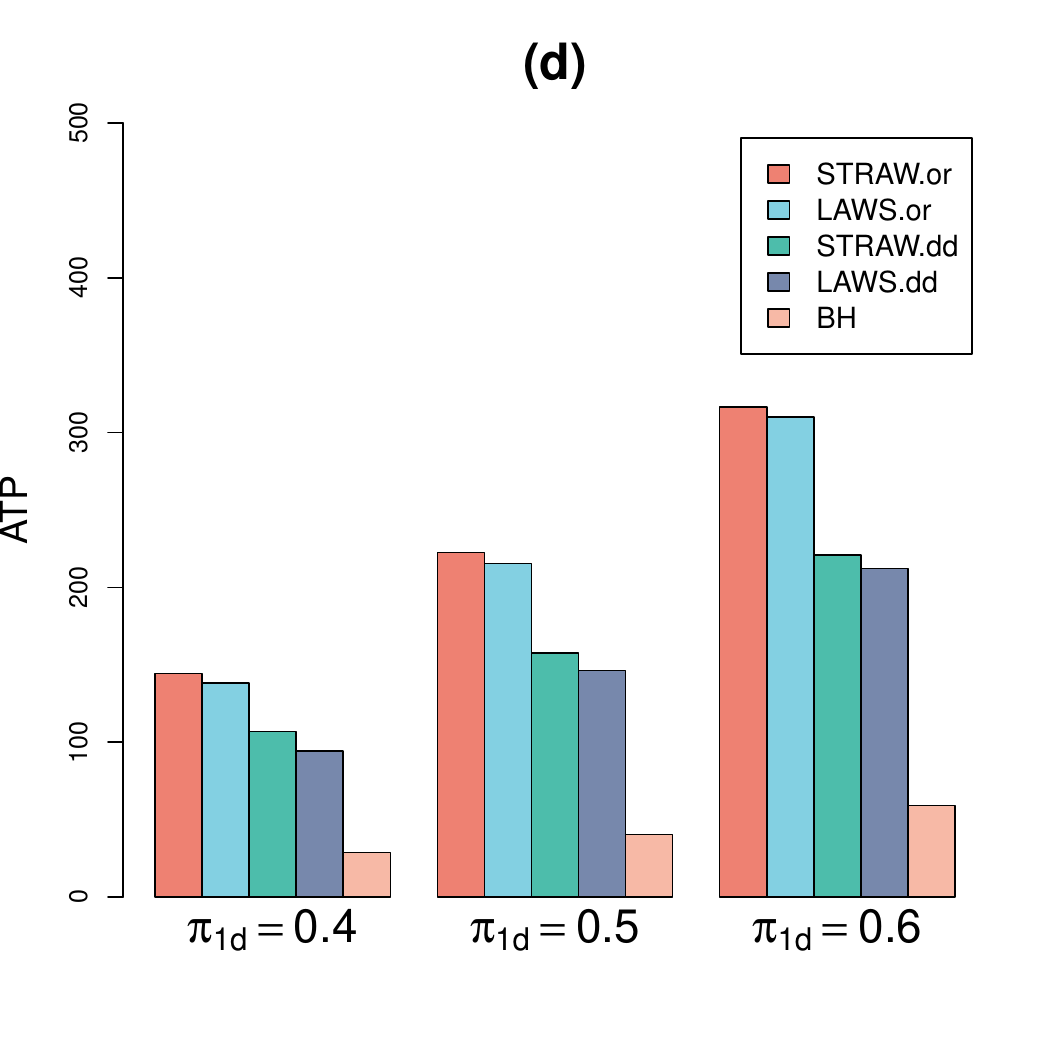}
\end{subfigure}
	\centering
	\caption{\footnotesize Simulation results in one-dimensional settings: (a)-(b) simulation results in Scenario 1; (c)-(d) simulation results in Scenario 2.}
	\label{fig:1}
\end{figure}

\subsection{Simulation Studies in Two-Dimensional Settings}

\par
In this section, consider simultaneous testing of $m=6400$ null hypotheses located at a $80\times80$ lattice in a region of two-dimensional Euclidean space. The local sparsity levels $\{\pi_1(s)\}_{s\in\{1,\cdots,80\}\times\{1,\cdots,80\}}$ are set to
\begin{eqnarray*}
	\pi_1(s) &=& \pi_{2d}, \text{~for~} s\in\{51, 52, \cdots, 65\}\times\{51, 52, \cdots, 65\}\cup\\
	&&~~~~~~~~~~~~~~~~\{s\mid d(s, (20, 20))\leq 10\};\\
	\pi_1(s) &=& 0.01, \text{~for~other~locations},
\end{eqnarray*}
where $d(x,y)$ is the Euclidean distance between $x$ and $y$. Similar to one-dimensional setting, let $\{\theta(s)\}_{s\in\{1,\cdots,80\}\times\{1,\cdots,80\}}$ and $\{X(s)\}_{s\in\{1,\cdots,80\}\times\{1,\cdots,80\}}$ be generated from
\begin{eqnarray*}
	\theta(s)          &\sim& \text{Bernoulli}(\pi_1(s)), \text{~for~} s\in\{1,\cdots,80\}\times\{1,\cdots,80\},\\
	X(s)\mid\theta(s)  &\sim& (1-\theta(s))N(0, 1) + \theta(s)N(\mu, 1), \text{~for~} s\in\{1,\cdots,80\}\times\{1,\cdots,80\}.
\end{eqnarray*}
Consider the following two scenarios.

\par
{\bf Scenario 3:} fix $\pi_{2d}=0.6$ and change $\mu$ from $1.5$ to $2.0$.

\par
{\bf Scenario 4:} fix $\mu=2.0$ and change $\pi_{2d}$ from $0.4$ to $0.6$.

\par
The corresponding simulation results are displayed in Figure 2. We can see from Figure 2 that all methods control the FDR well with one exception: the FDR of the data-driven STRAW procedure slightly exceeds $0.1$ when $\pi_{2d}=0.4$. By incorporating the spatial information successfully, both the STRAW procedures (STRAW.or and STRAW.dd) and the LAWS procedures (LAWS.or and LAWS.dd) exhibit significantly higher statistical power compared to the BH procedure. Moreover, the STRAW procedure outperforms the LAWS procedure in terms of power due to the more flexible selection of weights. Additionally, we conduct simulation studies in three-dimensional settings, and the comprehensive simulation results are provided in Appendix.

\begin{figure}[htp]
	\centering
	\begin{subfigure}{0.48\textwidth}
		\includegraphics[width=3in]{./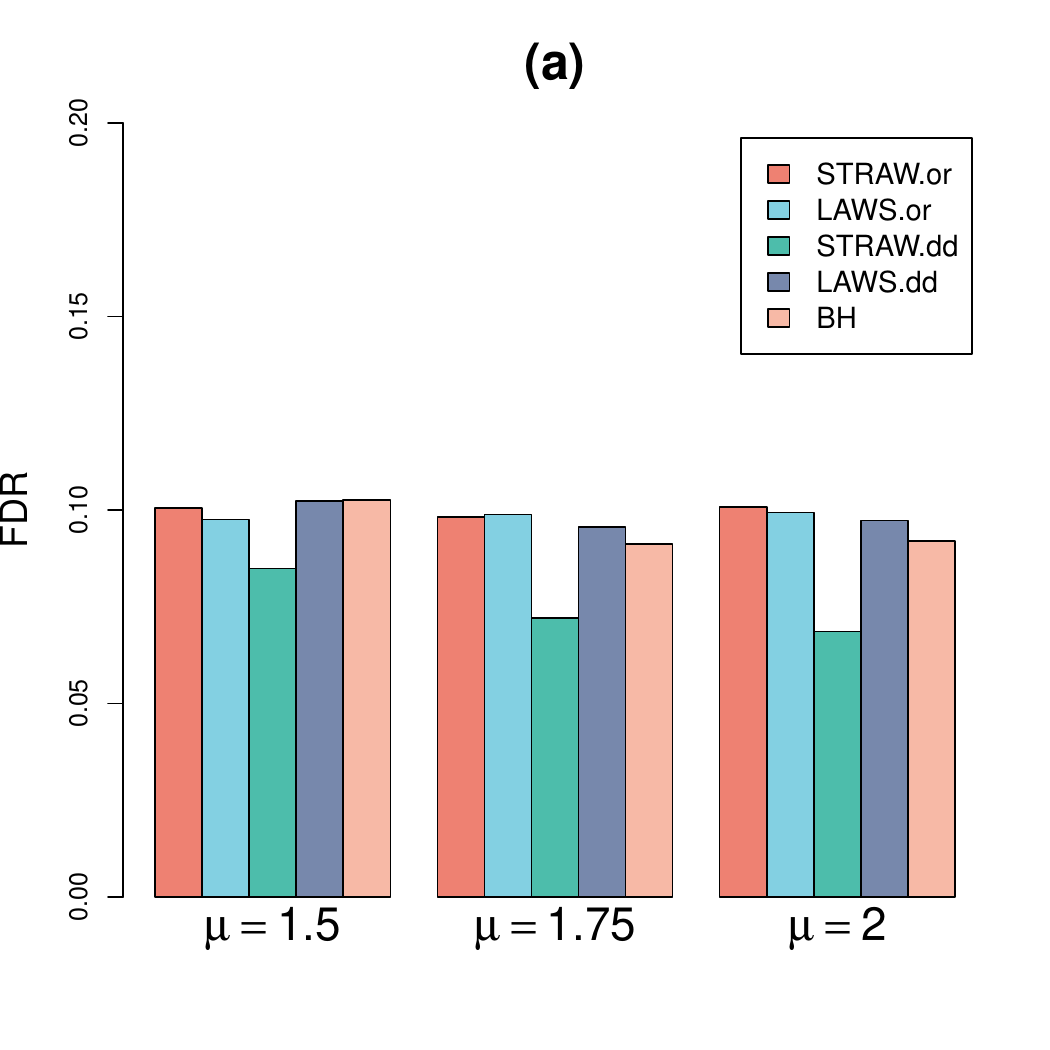}
	\end{subfigure}
\begin{subfigure}{0.48\textwidth}
	\includegraphics[width=3in]{./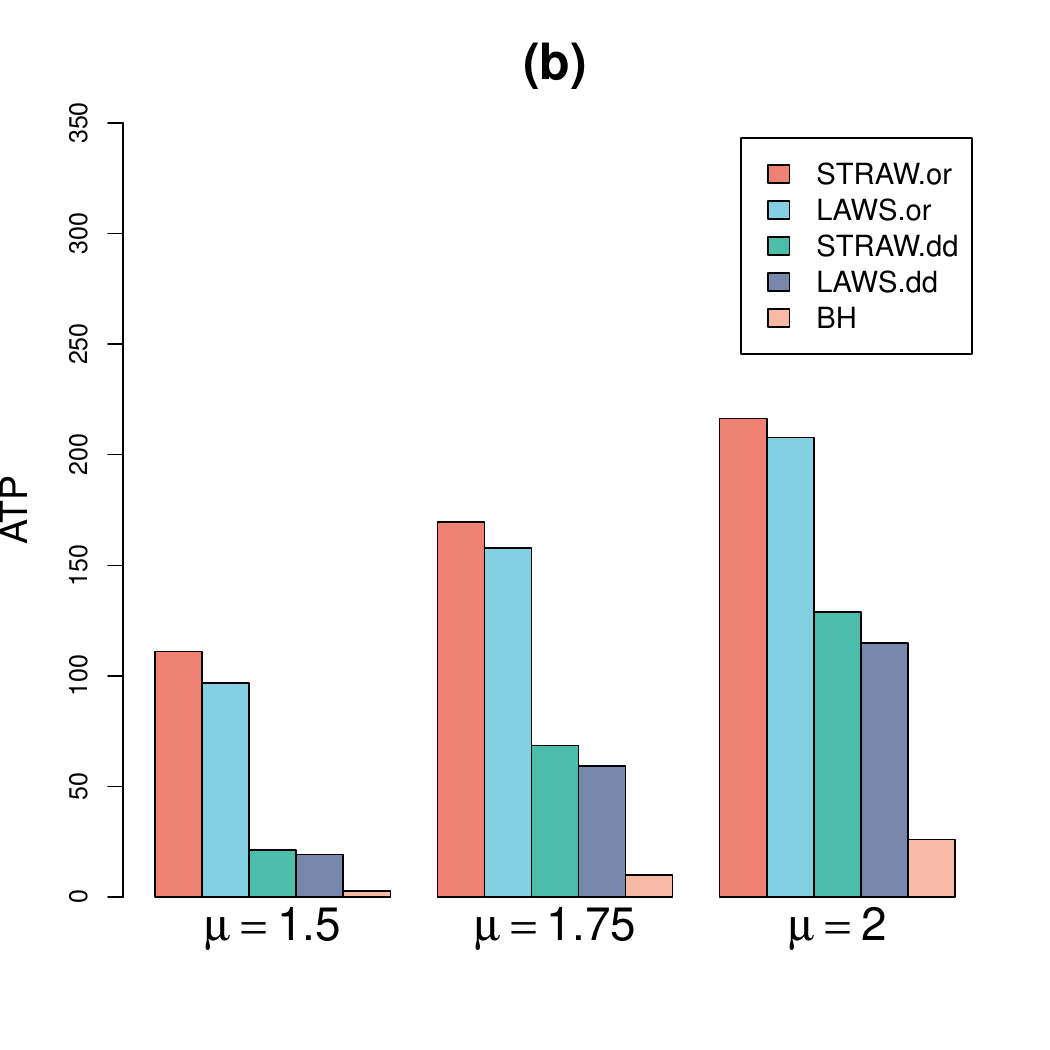}
\end{subfigure}\\
\begin{subfigure}{0.48\textwidth}
	\includegraphics[width=3in]{./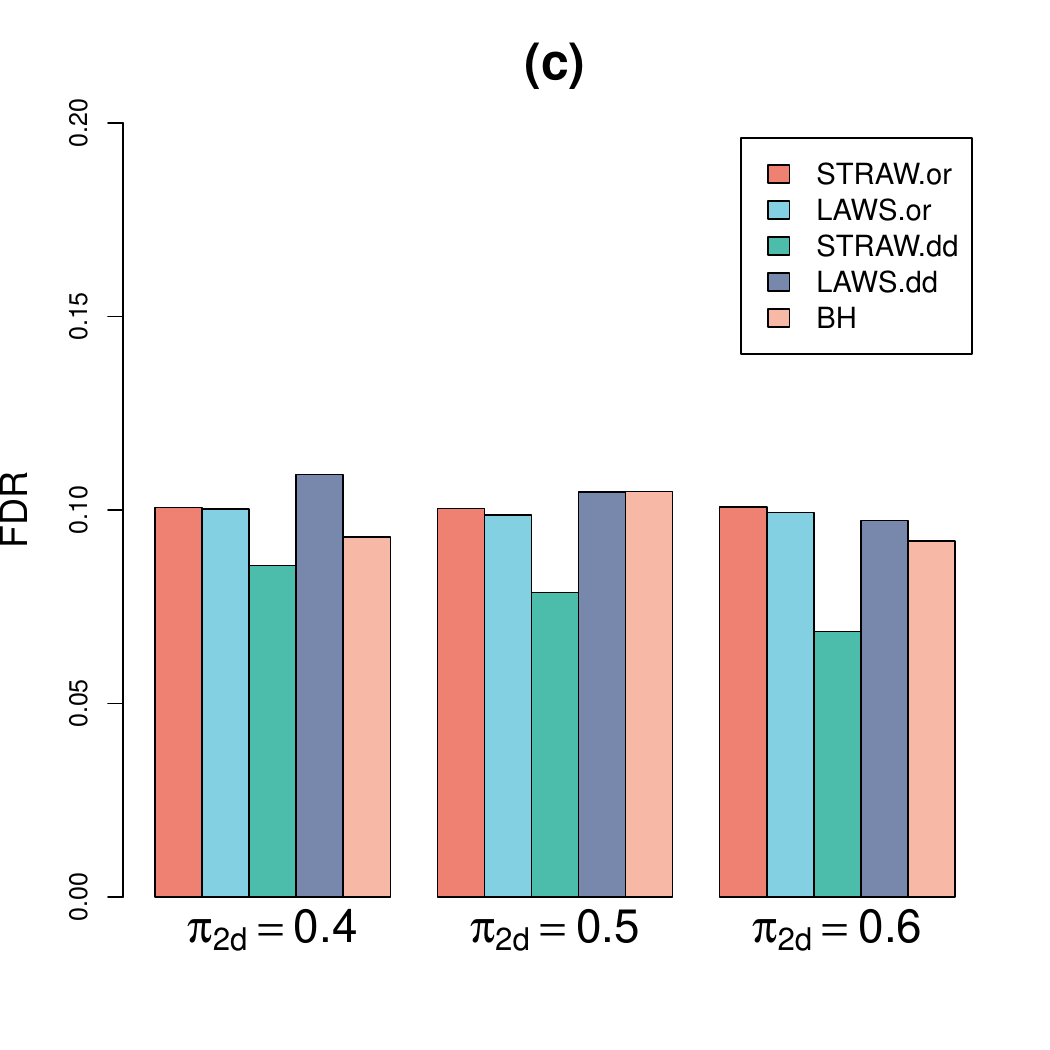}
\end{subfigure}
\begin{subfigure}{0.48\textwidth}
	\includegraphics[width=3in]{./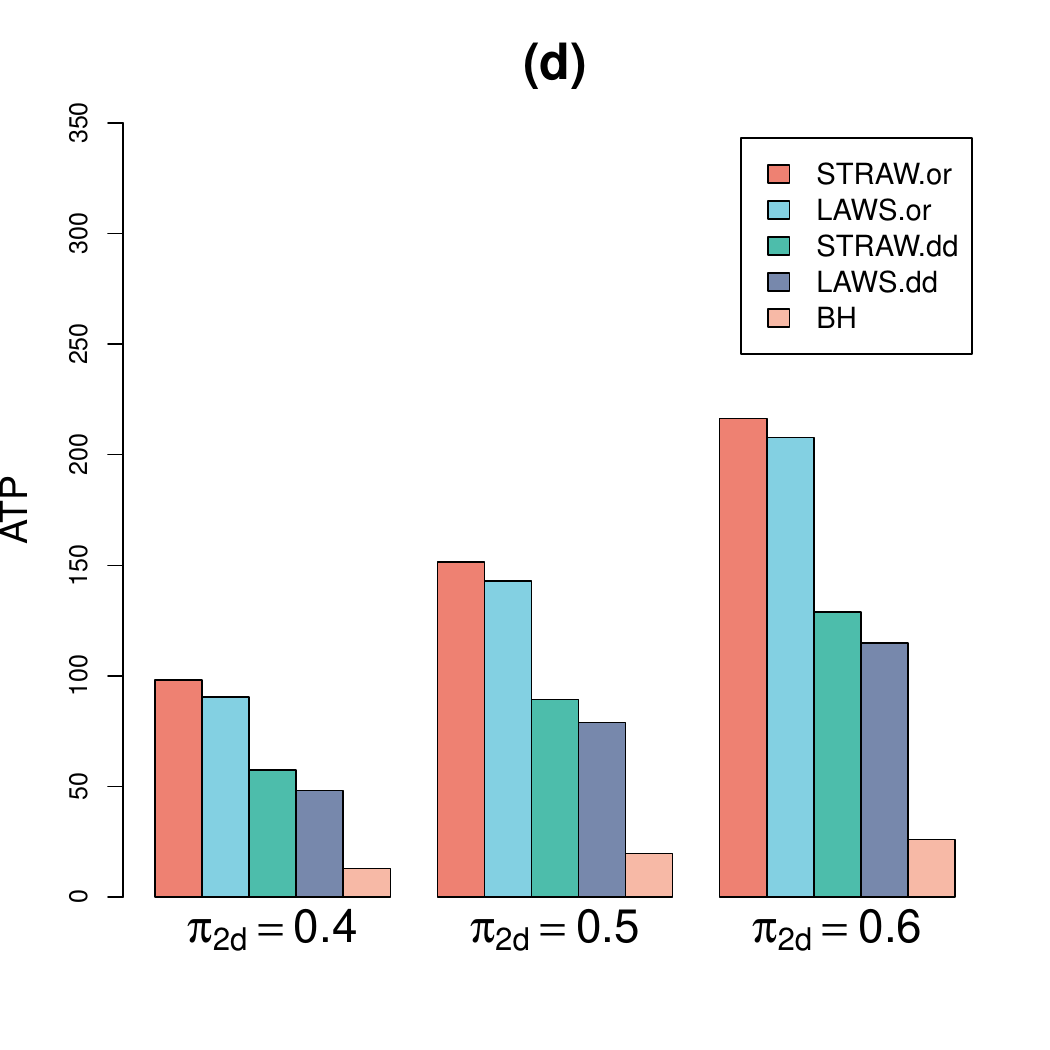}
\end{subfigure}	
	\centering
	\caption{  \footnotesize Simulation results in two-dimensional settings: (a)-(b) simulation results in Scenario 3; (c)-(d) simulation results in Scenario 4.}
	\label{fig:4}
\end{figure}

\section{Real Data Analysis}

\par
To demonstrate the effectiveness of the STRAW procedure in practical applications, we employ it in a study of attention deficit hyperactivity disorder (ADHD), utilizing magnetic resonance imaging (MRI) data. The corresponding data is produced by the ADHD-200 Sample Initiative, and can be accessed at http://neurobureau.projects.nitrc.org/ADHD200/Data.html. It has been reported that patients with ADHD had decreased amplitude of low-frequency ($0.01-0.08$ Hz) fluctuation (ALFF) in some brain regions, including the right inferior frontal cortex, left sensorimotor cortex, and bilateral cerebellum and the vermis (Zang {\it et al.,} 2007). However, it has been indicated that the ALFF is sensitive to the physiological noise (Zou {\it et al.,} 2008). To overcome this limitation to some extent, we employ the fractional ALFF (fALFF) proposed by Zou {\it et al.} (2008) for this real data analysis. Specifically, this dataset comprises $194$ individuals, with $78$ individuals diagnosed with ADHD and 116 are normal controls, and is preprocessed to generate the fALFF by the Neuro Bureau. To identify brain regions showing significant differences between individuals with ADHD and controls, we perform two-sample $t$-tests to compare the two groups. The $p$-values are calculated by using the normal approximation, and we employ the LAWS procedure as a comparative method.

\par
The detailed testing results are displayed in Figures 3 and 4. In total, $40461$ out of $133574$ ($49\times58\times47$) valid brain regions are tested simultaneously. Figure 3 shows that the significant brain regions identified by LAWS and STRAW, respectively, with the FDR level fixed at $0.05$. We can see from Figure 3 that the significant brain regions identified by STRAW is more aggregated. Moreover, the STRAW procedure identifies $735$ significant brain regions while the LAWS procedure recovers $580$. Figure 4 display the the number of discoveries identified by each multiple testing procedures relative to the target FDR level that varies from $0.02$ to $0.05$. We can conclude that the STRAW procedure achieves a higher multiple testing efficacy by leveraging spatial structure information via a flexible weighting strategy.

\begin{figure}[htp]
	\centering
	\begin{subfigure}{0.48\textwidth}
		\includegraphics[width=3in]{./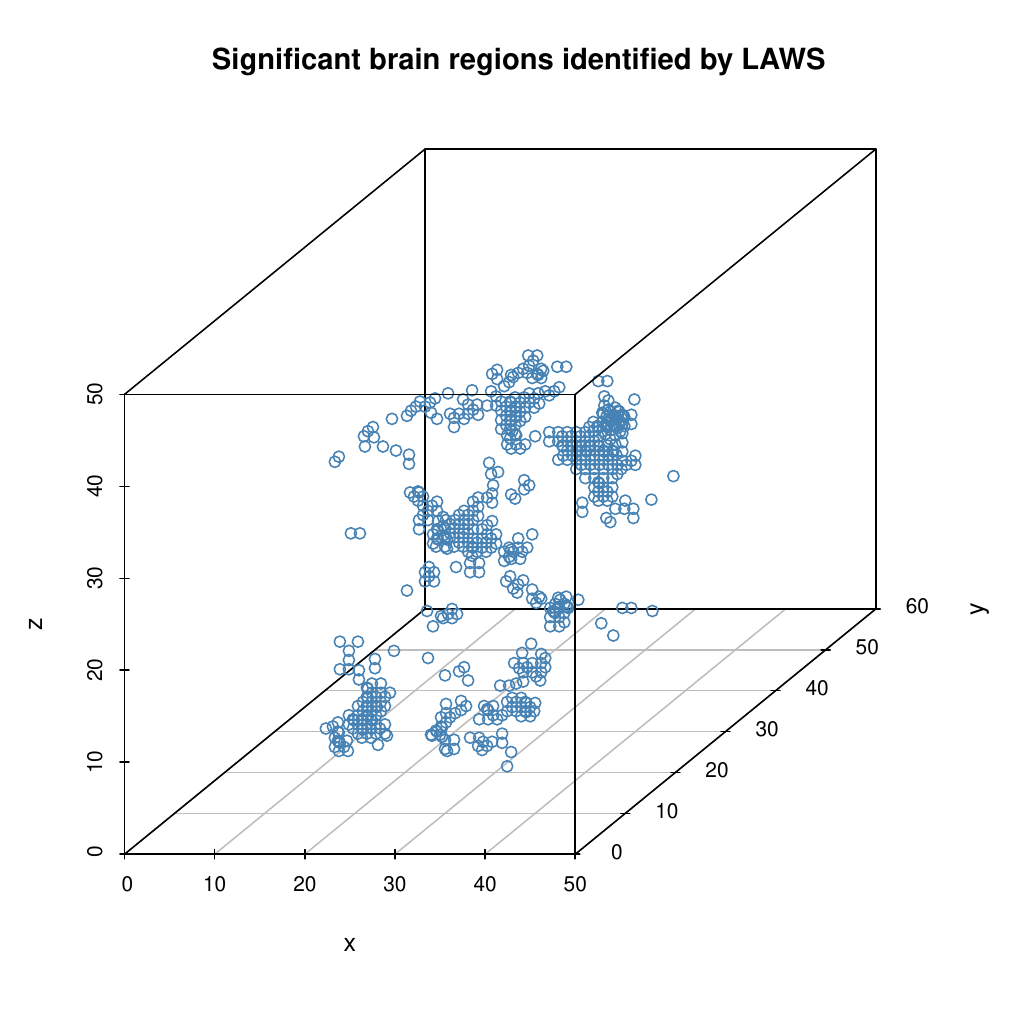}
	\end{subfigure}
	\begin{subfigure}{0.48\textwidth}
		\includegraphics[width=3in]{./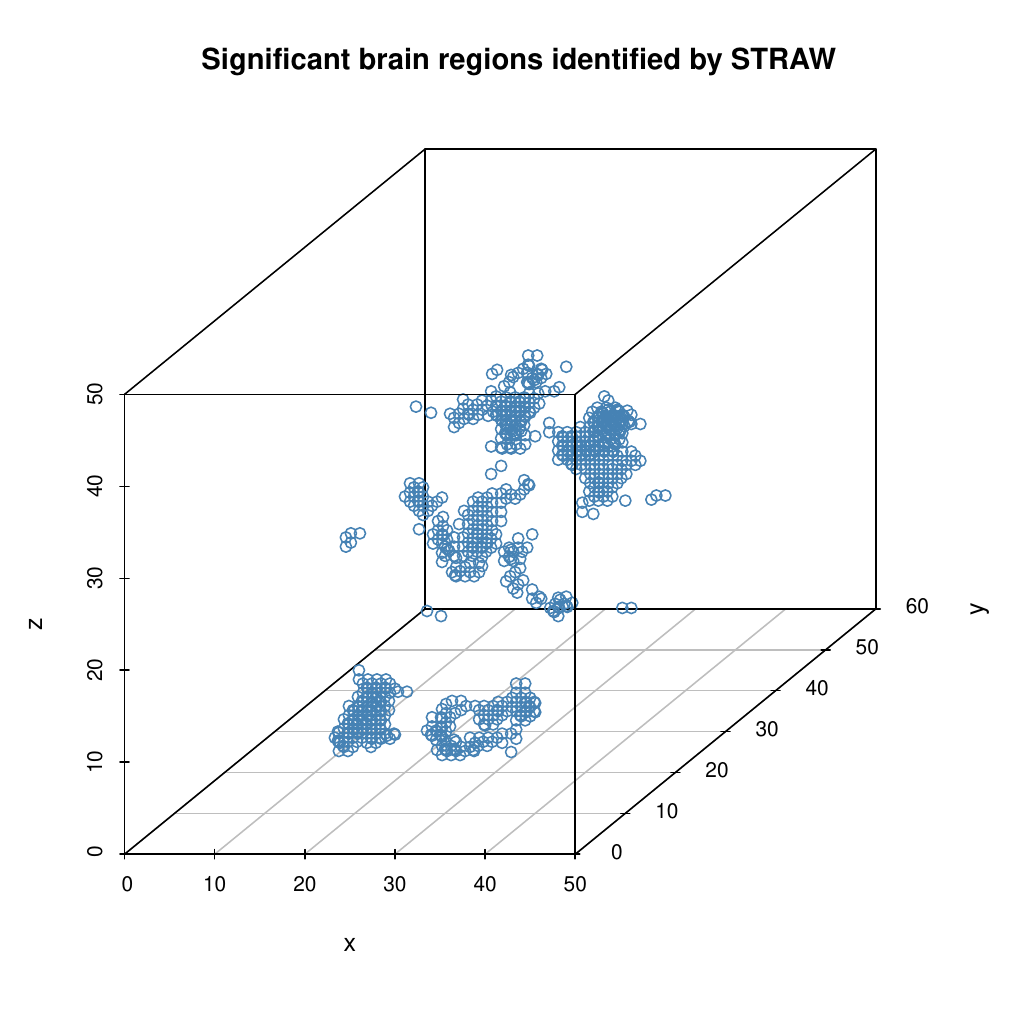}
	\end{subfigure}
	\centering
	\caption{  \footnotesize Significant brain regions identified by LAWS and STRAW, respectively, with the FDR level fixed at $0.05$.}
	\label{fig:4}
\end{figure}

\begin{figure}[H]
	\centering
	\includegraphics[height=3.5in,width=4in]{./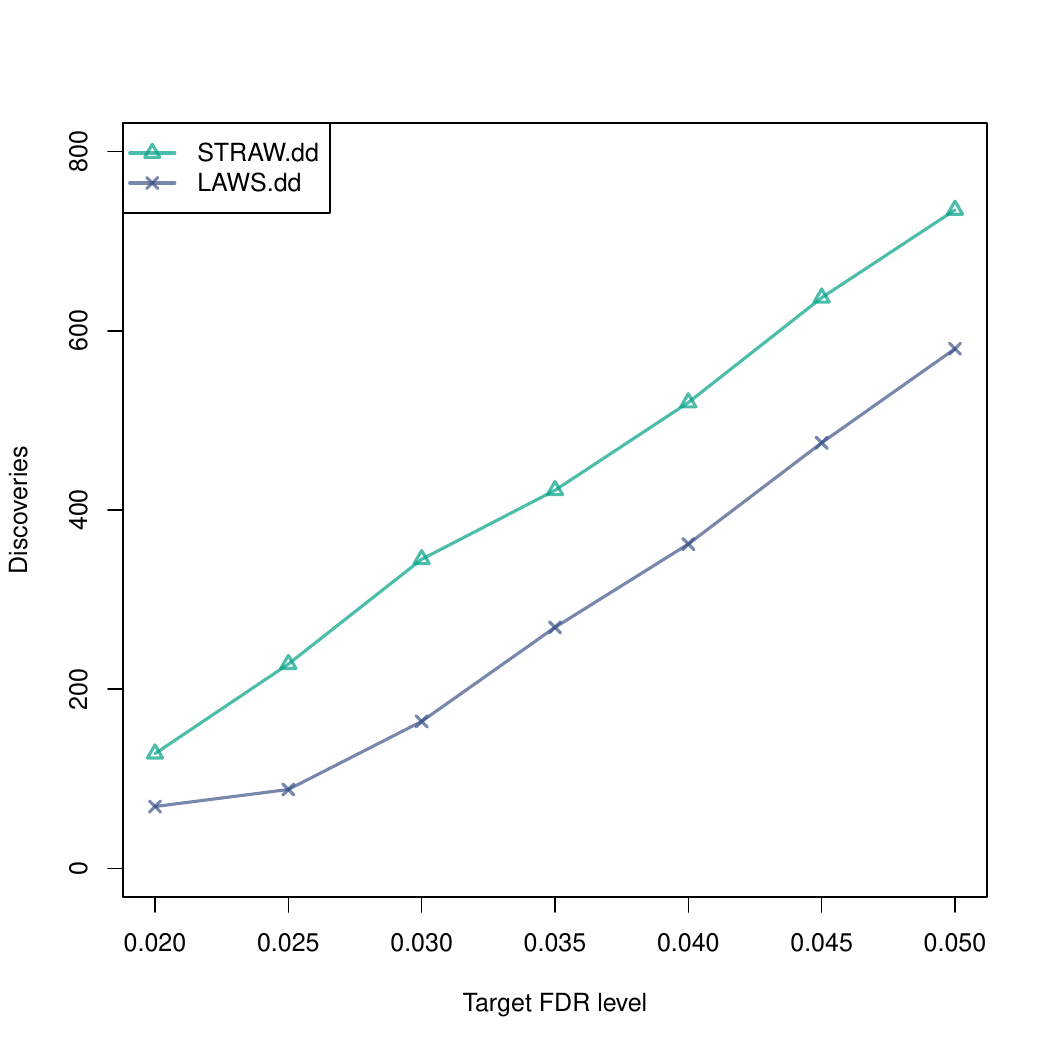}\\[-5mm]
	\caption{\footnotesize{The number of discoveries identified by LAWS and STRAW relative to the target FDR level that varies from $0.02$ to $0.05$.}}
	\label{fig:4}
\end{figure}

\section{Discussion}

\par
In this paper, we present a novel large-scale spatial multiple testing procedure based on a new class of weighted $p$-values. One key contribution of our approach lies in the incorporation of spatial information into multiple testing through the use of a class of weighted $p$-values. By assigning appropriate weights to the individual $p$-values based on their spatial proximity, the proposed approach is able to account for spatial structure information and enhance the accuracy of the testing results. One important aspect discussed in this paper is the selection of suitable turning parameter $k$ for the spatial weighting scheme. A fully data-driven method with the aim of rejecting as many null hypotheses as possible is provided to select $k$. Another major contribution of this paper is providing a flexible way to estimate local sparsity level by using smoothed Lfdr. The kernel function used in smoothed Lfdr can be chosen flexibly on the basis of the prior knowledge of the spatial structure. The effectiveness of the proposed multiple testing procedure has been demonstrated through extensive simulations and a real data analysis.

\par
It is worth noting that our proposed method is not limited to a specific application domain and can be applied to various fields that involve spatial data analysis, such as environmental monitoring, epidemiology, and geospatial statistics. The flexibility and effectiveness of the proposed approach make it a promising tool for researchers in these fields.

\par
In conclusion, this paper has introduced a new class of weighted $p$-values and demonstrated their application in a large-scale spatial multiple testing framework. The incorporation of spatial information through weighted $p$-values has shown significant improvements in FDR control and power. Future research can focus on exploring additional applications of the proposed method and further refining the weighting scheme to optimize its performance in different spatial data settings.

\section*{Funding}
This work is partially supported by the National Natural Science Foundation of China Grant (No. 12301333).


\newpage

\section{Appendix}

\subsection*{A.1 Proof of Theorem 1}

\par
We first introduce an algorithm (Algorithm 3) and show that it is equivalent to the oracle STRAW procedure.
\begin{algorithm}[htp]
	\begin{enumerate} \setlength\itemsep{-0.2em}
		\item The FDP is estimated by
		\[
		\widehat{\mathrm{FDP}}(t, \widetilde{k}) = \dfrac{\sum_{s\in\mathbb{S}}(1-\pi_1(s))^{1-1/\widetilde{k}} \pi_1(s)^{1/\widetilde{k}} t}{\max\left\{ \sum_{s\in\mathbb{S}}I {(p_{\text{weighted}}(s, \widetilde{k})\leq t)}, 1 \right\}}.
		\]
		\item For $0\leq\alpha\leq1$, let $\widehat{t}_{\widetilde{k}}=\sup\left\{t\geq0: \widehat{\mathrm{FDP}}(t, \widetilde{k})\leq \alpha\right\}$.
		\item For $s\in\mathbb{S}$, reject the null hypotheses for which $p_{\text{weighted}}(s, \widetilde{k})\leq \widehat{t}_{\widetilde{k}}$.
	\end{enumerate}
	\caption{An equivalent STRAW procedure}
	\label{alg:3}
\end{algorithm}

\begin{lemma}
	Algorithm 3 is equivalent to the oracle STRAW procedure.
\end{lemma}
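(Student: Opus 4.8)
The plan is to show that Algorithm 3 and the oracle STRAW procedure reject exactly the same hypotheses, even though the numerical cutoffs they employ need not coincide. Since both procedures reject precisely those $s$ whose weighted $p$-value lies below a cutoff applied to the common ordered list $p^{(1)}_{\text{weighted}}(\widetilde k)\le\cdots\le p^{(m)}_{\text{weighted}}(\widetilde k)$, it suffices to prove that both cutoffs admit the same number of smallest weighted $p$-values. First I would record the algebraic identity $(1-\pi_1(s))\varphi_{\widetilde k}(\pi_1(s))=(1-\pi_1(s))^{1-1/\widetilde k}\pi_1(s)^{1/\widetilde k}$, which follows at once from $\varphi_{\widetilde k}(x)=[x/(1-x)]^{1/\widetilde k}$. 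Writing $C=\sum_{s\in\mathbb S}(1-\pi_1(s))^{1-1/\widetilde k}\pi_1(s)^{1/\widetilde k}$, a constant independent of $t$, the selection rule of Algorithm 1 becomes $l_{\widetilde k}=\max\{j: C\,p^{(j)}_{\text{weighted}}(\widetilde k)/j\le\alpha\}$, while the estimated FDP of Algorithm 3 becomes $\widehat{\mathrm{FDP}}(t,\widetilde k)=Ct/\max\{R(t),1\}$, where $R(t)=\sum_{s\in\mathbb S}I(p_{\text{weighted}}(s,\widetilde k)\le t)$ is the number of rejections at threshold $t$.

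Next I would analyze the shape of $t\mapsto\widehat{\mathrm{FDP}}(t,\widetilde k)$. Its numerator $Ct$ is continuous and strictly increasing, whereas $R(t)$ is a nondecreasing, right-continuous step function that jumps upward exactly at the ordered values $p^{(j)}_{\text{weighted}}(\widetilde k)$. Consequently $\widehat{\mathrm{FDP}}(\cdot,\widetilde k)$ is strictly increasing on each interval $[p^{(j)}_{\text{weighted}}(\widetilde k),p^{(j+1)}_{\text{weighted}}(\widetilde k))$, on which $R(t)\equiv j$, and it drops at every ordered value. In particular its value at $t=p^{(j)}_{\text{weighted}}(\widetilde k)$ equals $g(j):=C\,p^{(j)}_{\text{weighted}}(\widetilde k)/j$, precisely the quantity compared against $\alpha$ in Algorithm 1.

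Then I would locate $\widehat t_{\widetilde k}=\sup\{t\ge0:\widehat{\mathrm{FDP}}(t,\widetilde k)\le\alpha\}$. Setting $l=l_{\widetilde k}$, maximality gives $g(j)>\alpha$ for every $j>l$; since $\widehat{\mathrm{FDP}}$ starts at $g(j)$ and increases on $[p^{(j)}_{\text{weighted}}(\widetilde k),p^{(j+1)}_{\text{weighted}}(\widetilde k))$, it exceeds $\alpha$ throughout each such interval with $j\ge l+1$. Hence the feasible set is confined to $t<p^{(l+1)}_{\text{weighted}}(\widetilde k)$, and on $[p^{(l)}_{\text{weighted}}(\widetilde k),p^{(l+1)}_{\text{weighted}}(\widetilde k))$ the constraint $Ct/l\le\alpha$ reads $t\le\alpha l/C$. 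From $g(l+1)>\alpha$ one deduces $p^{(l+1)}_{\text{weighted}}(\widetilde k)>\alpha(l+1)/C>\alpha l/C$, so $\widehat t_{\widetilde k}=\alpha l/C$ lies in $[p^{(l)}_{\text{weighted}}(\widetilde k),p^{(l+1)}_{\text{weighted}}(\widetilde k))$. Therefore exactly the $l$ smallest weighted $p$-values satisfy $p_{\text{weighted}}(s,\widetilde k)\le\widehat t_{\widetilde k}$, which is the same rejection set as Algorithm 1.

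The main obstacle is this final localization step: the two cutoffs $\widehat t_{\widetilde k}=\alpha l/C$ and $p^{(l)}_{\text{weighted}}(\widetilde k)$ are genuinely different numbers, so the equivalence cannot be argued by equating thresholds but must be read off the induced rejection sets. The crux is ruling out any threshold at or beyond $p^{(l+1)}_{\text{weighted}}(\widetilde k)$, which combines the per-interval monotonicity of $\widehat{\mathrm{FDP}}$ with the maximality defining $l_{\widetilde k}$, together with the elementary inequality $\alpha l/C<p^{(l+1)}_{\text{weighted}}(\widetilde k)$. I would finally note that ties among the weighted $p$-values, and the degenerate case $l_{\widetilde k}=0$ (with the convention $p^{(0)}_{\text{weighted}}(\widetilde k)=0$, so that neither procedure rejects), are handled by the same monotonicity bookkeeping.
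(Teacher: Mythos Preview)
Your proposal is correct and follows essentially the same route as the paper: both arguments show that the threshold $\widehat t_{\widetilde k}$ produced by Algorithm~3 lands in the interval $[p^{(l_{\widetilde k})}_{\text{weighted}}(\widetilde k),\,p^{(l_{\widetilde k}+1)}_{\text{weighted}}(\widetilde k))$, and hence induces the same rejection set as the oracle STRAW procedure. Your version is somewhat more explicit---you compute $\widehat t_{\widetilde k}=\alpha\, l_{\widetilde k}/C$ exactly and spell out the per-interval monotonicity of $\widehat{\mathrm{FDP}}(\cdot,\widetilde k)$ that rules out any feasible threshold at or beyond $p^{(l_{\widetilde k}+1)}_{\text{weighted}}(\widetilde k)$, a step the paper's proof leaves largely implicit when it invokes ``the definition of $\widehat t_{\widetilde k}$'' together with (A.2)--(A.3).
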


\begin{proof}[\bf Proof of Lemma 1]
	Recall that the oracle STRAW procedure consists of two steps, the first step is to select a threshold according to
	\[
	l_{\widetilde{k}}=\max\left\{j: \frac{1}{j}\sum_{s\in\mathbb{S}} (1-\pi_1(s))^{1-1/\widetilde{k}} \pi_1(s)^{1/\widetilde{k}}p^{(j)}_{\text{weighted}}(\widetilde{k})\leq\alpha\right\}, \eqno{(\mathrm{A}.1)}
	\]
	and the second step is to reject the null hypotheses for which $p_{\text{weighted}}(s, \widetilde{k})\leq p^{(l_{\widetilde{k}})}_{\text{weighted}}(\widetilde{k})$. To prove Lemma 1, it suffices to show that $p^{(l_{\widetilde{k}})}_{\text{weighted}}(\widetilde{k})\leq \widehat{t}_{\widetilde{k}} < p^{(l_{\widetilde{k}}+1)}_{\text{weighted}}(\widetilde{k})$. By the definition of $l_{\widetilde{k}}$, it is clear that
	\[
	\dfrac{\sum_{s\in\mathbb{S}}(1-\pi_1(s))^{1-1/\widetilde{k}} \pi_1(s)^{1/\widetilde{k}} p^{(l_{\widetilde{k}})}_{\text{weighted}}(\widetilde{k})}{\max\left\{ \sum_{s\in\mathbb{S}}I {(p_{\text{weighted}}(s, \widetilde{k})\leq p^{(l_{\widetilde{k}})}_{\text{weighted}}(\widetilde{k}))}, 1 \right\}} \leq \alpha, \eqno{(\mathrm{A}.2)}
	\]
	and
	\[
	\dfrac{\sum_{s\in\mathbb{S}}(1-\pi_1(s))^{1-1/\widetilde{k}} \pi_1(s)^{1/\widetilde{k}} p^{(l_{\widetilde{k}}+1)}_{\text{weighted}}(\widetilde{k})}{\max\left\{ \sum_{s\in\mathbb{S}}I {(p_{\text{weighted}}(s, \widetilde{k})\leq p^{(l_{\widetilde{k}}+1)}_{\text{weighted}}(\widetilde{k}))}, 1 \right\}} > \alpha. \eqno{(\mathrm{A}.3)}
	\]
	By the definition of $\widehat{t}_{\widetilde{k}}$, there exists a sequence $\left\{t_l\right\}$ with $t_l\leq \widehat{t}_{\widetilde{k}}$ and $t_l\rightarrow \widehat{t}_{\widetilde{k}}$, such that
	\[
	\dfrac{\sum_{s\in\mathbb{S}}(1-\pi_1(s))^{1-1/\widetilde{k}} \pi_1(s)^{1/\widetilde{k}} t_l}{\max\left\{ \sum_{s\in\mathbb{S}}I {(p_{\text{weighted}}(s, \widetilde{k})\leq t_l)}, 1 \right\}} \leq \alpha. \eqno{(\mathrm{A}.4)}
	\]
	Note that $t_l\leq \widehat{t}_{\widetilde{k}}$ implies that $\sum_{s\in\mathbb{S}}I {(p_{\text{weighted}}(s, \widetilde{k})\leq t_l)} \leq \sum_{s\in\mathbb{S}}I {(p_{\text{weighted}}(s, \widetilde{k})\leq \widehat{t}_{\widetilde{k}})}$. It follows that
	\[
	\dfrac{\sum_{s\in\mathbb{S}}(1-\pi_1(s))^{1-1/\widetilde{k}} \pi_1(s)^{1/\widetilde{k}} t_l}{\max\left\{ \sum_{s\in\mathbb{S}}I {(p_{\text{weighted}}(s, \widetilde{k})\leq \widehat{t}_{\widetilde{k}})}, 1 \right\}} \leq \alpha. \eqno{(\mathrm{A}.5)}
	\]
	Letting $l \rightarrow+\infty$, we have
	\[
	\dfrac{\sum_{s\in\mathbb{S}}(1-\pi_1(s))^{1-1/\widetilde{k}} \pi_1(s)^{1/\widetilde{k}} \widehat{t}_{\widetilde{k}}}{\max\left\{ \sum_{s\in\mathbb{S}}I {(p_{\text{weighted}}(s, \widetilde{k})\leq \widehat{t}_{\widetilde{k}})}, 1 \right\}} \leq \alpha. \eqno{(\mathrm{A}.6)}
	\]
	Combining (A.2), (A.3) and the definition of $\widehat{t}_{\widetilde{k}}$, we conclude that
	\[
	p^{(l_{\widetilde{k}})}_{\text{weighted}}(\widetilde{k})\leq \widehat{t}_{\widetilde{k}} < p^{(l_{\widetilde{k}}+1)}_{\text{weighted}}(\widetilde{k}).
	\]
	This completes the proof of Lemma 1.
\end{proof}

\begin{algorithm}[htp]%
	\begin{enumerate} \setlength\itemsep{0.2em}
		\item Rank the rescaled weighted $p$-values from the smallest to largest $p^{*(1)}(\widetilde{k}), \cdots, p^{*(m)}(\widetilde{k})$.
		\item For $0\leq\alpha\leq1$, let $l^*_{\widetilde{k}}=\max\left\{1\leq i\leq m: p^{*(i)}(\widetilde{k}) \leq \alpha i/m \right\}$.
		\item If $l^*_{\widetilde{k}}$ exists, then the null hypotheses $H_0(s)$ for which $p^*(s, \widetilde{k}) \leq p^{*(l^*_{\widetilde{k}})}(\widetilde{k})$ are rejected; otherwise no hypothesis is rejected.
	\end{enumerate}
	\caption{Procedure 1}
	\label{alg:4}
\end{algorithm}

Define the rescaled weighted $p$-value as
\[
p^*(s, \widetilde{k}) = \min\left\{p(s)/\omega^*(s, \widetilde{k}), 1 \right\},
\]
where $\omega^*(s, \widetilde{k}) = m \left(\frac{\pi_1(s)}{1-\pi_1(s)}\right)^{1/\widetilde{k}} \left\{\sum_{s\in\mathbb{S}} \left(\frac{\pi_1(s)}{1-\pi_1(s)}\right)^{1/\widetilde{k}}\right\}^{-1}$, for $s\in\mathbb{S}$. Applying the BH procedure to the rescaled weighted $p$-values yields the following procedure.

Likewise, we can obtain an algorithm (Algorithm 5) which is equivalent to Procedure 1.

\begin{algorithm}[htp]%
	\begin{enumerate} \setlength\itemsep{-0.2em}
		\item The FDP is estimated by
		\[
		\widehat{\mathrm{FDP}}^*(t, \widetilde{k}) = \dfrac{m t}{\max\left\{ \sum_{s\in\mathbb{S}}I {(p^*(s, \widetilde{k})\leq t)}, 1 \right\}}.
		\]
		\item For $0\leq\alpha\leq1$, let $t^*_{\widetilde{k}}=\sup\left\{t\geq0: \widehat{\mathrm{FDP}}^*(t, \widetilde{k})\leq \alpha\right\}$.
		\item For $s\in\mathbb{S}$, reject the null hypotheses for which $p^*(s, \widetilde{k})\leq t^*_{\widetilde{k}}$.
	\end{enumerate}
	\caption{An equivalent Procedure 1}
	\label{alg:5}
\end{algorithm}

\begin{lemma}
	Algorithm 5 is equivalent to Procedure 1.
\end{lemma}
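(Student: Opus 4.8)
The plan is to mirror the argument already used for Lemma 1: show that the threshold $t^*_{\widetilde{k}}$ produced by Algorithm 5 and the order-statistic cutoff $p^{*(l^*_{\widetilde{k}})}(\widetilde{k})$ produced by Procedure 1 induce the same rejection set. Concretely, I would establish the sandwich
\[
p^{*(l^*_{\widetilde{k}})}(\widetilde{k}) \leq t^*_{\widetilde{k}} < p^{*(l^*_{\widetilde{k}}+1)}(\widetilde{k}),
\]
from which it follows immediately that $\{s : p^*(s, \widetilde{k}) \leq t^*_{\widetilde{k}}\} = \{s : p^*(s, \widetilde{k}) \leq p^{*(l^*_{\widetilde{k}})}(\widetilde{k})\}$, since both sets consist of exactly the $l^*_{\widetilde{k}}$ smallest rescaled weighted $p$-values, and hence the two procedures reject identically.

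First I would record the explicit form of $\widehat{\mathrm{FDP}}^*(\cdot, \widetilde{k})$ at the order statistics. Because $\sum_{s\in\mathbb{S}} I(p^*(s, \widetilde{k}) \leq p^{*(i)}(\widetilde{k})) = i$, we obtain $\widehat{\mathrm{FDP}}^*(p^{*(i)}(\widetilde{k}), \widetilde{k}) = m\,p^{*(i)}(\widetilde{k})/i$, so that the inequality $\widehat{\mathrm{FDP}}^*(p^{*(i)}(\widetilde{k}), \widetilde{k}) \leq \alpha$ is precisely the BH condition $p^{*(i)}(\widetilde{k}) \leq \alpha i/m$. This identifies the index $l^*_{\widetilde{k}}$ of Procedure 1 with $\max\{i : \widehat{\mathrm{FDP}}^*(p^{*(i)}(\widetilde{k}), \widetilde{k}) \leq \alpha\}$. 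For the lower bound I would evaluate $\widehat{\mathrm{FDP}}^*$ at $t = p^{*(l^*_{\widetilde{k}})}(\widetilde{k})$ and invoke the defining inequality of $l^*_{\widetilde{k}}$ to conclude $\widehat{\mathrm{FDP}}^*(p^{*(l^*_{\widetilde{k}})}(\widetilde{k}), \widetilde{k}) \leq \alpha$, placing $p^{*(l^*_{\widetilde{k}})}(\widetilde{k})$ in the feasible set whose supremum defines $t^*_{\widetilde{k}}$; hence $t^*_{\widetilde{k}} \geq p^{*(l^*_{\widetilde{k}})}(\widetilde{k})$.

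For the strict upper bound, the key observation is that on each interval $[p^{*(i)}(\widetilde{k}), p^{*(i+1)}(\widetilde{k}))$ the counting denominator is constant and equal to $i$, so $\widehat{\mathrm{FDP}}^*(t, \widetilde{k}) = mt/i$ is increasing in $t$ there; combined with the maximality of $l^*_{\widetilde{k}}$ (which forces $\widehat{\mathrm{FDP}}^*(p^{*(i)}(\widetilde{k}), \widetilde{k}) > \alpha$ for every $i > l^*_{\widetilde{k}}$), this shows $\widehat{\mathrm{FDP}}^*(t, \widetilde{k}) > \alpha$ for all $t \geq p^{*(l^*_{\widetilde{k}}+1)}(\widetilde{k})$, so that $t^*_{\widetilde{k}}$ cannot reach $p^{*(l^*_{\widetilde{k}}+1)}(\widetilde{k})$.

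The main obstacle, and the only step requiring real care, is this strict upper bound, because $t^*_{\widetilde{k}}$ is a supremum over a set that need not be an interval: $\widehat{\mathrm{FDP}}^*(\cdot, \widetilde{k})$ jumps downward at every order statistic, so a priori the feasible set could contain points beyond $p^{*(l^*_{\widetilde{k}}+1)}(\widetilde{k})$. Ruling this out is exactly what the monotonicity-on-each-interval plus maximality-of-$l^*_{\widetilde{k}}$ argument accomplishes, provided one verifies the piecewise-constant-denominator claim on every interval up to and including the terminal interval $[p^{*(m)}(\widetilde{k}), \infty)$, where the denominator equals $m$. I would also interpret $p^{*(i)}(\widetilde{k})$ as the $i$th value in sorted order to cover ties, and dispose of the degenerate case in which no index satisfies the BH condition by noting that Procedure 1 then rejects nothing while $t^*_{\widetilde{k}} < p^{*(1)}(\widetilde{k})$, so Algorithm 5 likewise rejects nothing. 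This completes the equivalence.
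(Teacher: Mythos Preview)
Your proposal is correct and follows exactly the approach the paper intends: the paper omits the proof of Lemma 2, stating only that it is similar to that of Lemma 1, and your argument mirrors the sandwich strategy of Lemma 1 (showing $p^{*(l^*_{\widetilde{k}})}(\widetilde{k}) \leq t^*_{\widetilde{k}} < p^{*(l^*_{\widetilde{k}}+1)}(\widetilde{k})$) with the appropriate substitutions. If anything, your treatment of the strict upper bound---using the piecewise monotonicity of $\widehat{\mathrm{FDP}}^*(\cdot,\widetilde{k})$ together with the maximality of $l^*_{\widetilde{k}}$ to rule out feasible points beyond $p^{*(l^*_{\widetilde{k}}+1)}(\widetilde{k})$---is more carefully spelled out than the paper's own argument for Lemma 1.
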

Since the proof of Lemma 2 is similar to that of Lemma 1, we have omitted it..

\begin{lemma}
	Suppose that Assumptions (A1) and (A2) hold, then we have
	\[
	\underset{m\rightarrow\infty}{\lim\sup}~\mathrm{FDR}_{Procedure 1} \leq \alpha, \text{~and~} \lim\limits_{m\rightarrow\infty} \Pr (\mathrm{FDP}_{Procedure 1} \leq \alpha+\varepsilon) = 1,
	\]
	for any $\varepsilon>0$.
\end{lemma}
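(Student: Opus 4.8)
The plan is to follow the standard route for FDR control of a BH-type rule with a data-dependent threshold, exploiting the equivalent formulation in Algorithm 5. Write $R(t)=\sum_{s\in\mathbb{S}}I(p^*(s,\widetilde{k})\leq t)$ for the number of rejections and $V(t)=\sum_{s\in\mathbb{S}}(1-\theta(s))I(p^*(s,\widetilde{k})\leq t)$ for the number of false rejections, so that $\mathrm{FDP}_{\text{Procedure 1}}=V(t^*_{\widetilde{k}})/(R(t^*_{\widetilde{k}})\vee 1)$. By Lemma 2 the threshold satisfies $\widehat{\mathrm{FDP}}^*(t^*_{\widetilde{k}},\widetilde{k})=m\,t^*_{\widetilde{k}}/(R(t^*_{\widetilde{k}})\vee 1)\leq\alpha$, which yields the pivotal bound
\[
\mathrm{FDP}_{\text{Procedure 1}}\;\leq\;\alpha\,\frac{V(t^*_{\widetilde{k}})}{m\,t^*_{\widetilde{k}}}.
\]
The whole argument then reduces to showing that the ratio $V(t)/(mt)$ does not exceed $1$ by more than a vanishing amount at the random threshold $t^*_{\widetilde{k}}$.

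First I would establish the population-level bound. Conditioning on $\{\theta(s)\}_{s\in\mathbb{S}}$ and using super-uniformity of the null $p$-values together with $p^*(s,\widetilde{k})=\min\{p(s)/\omega^*(s,\widetilde{k}),1\}$, one obtains $\mathrm{E}[V(t)\mid\theta]\leq t\sum_{s\in\mathbb{S}}(1-\theta(s))\,\omega^*(s,\widetilde{k})$. Since $1-\theta(s)\leq1$ and the weights are normalized so that $\sum_{s\in\mathbb{S}}\omega^*(s,\widetilde{k})=m$ (this is exactly why $\omega^*$ carries the factor $m/\sum_{s'}(\pi_1(s')/(1-\pi_1(s')))^{1/\widetilde{k}}$), taking expectations gives $\mathrm{E}[V(t)]\leq mt$, i.e. the population analogue of the pivotal ratio is at most $1$.

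The two genuine difficulties are the randomness of $t^*_{\widetilde{k}}$ and the dependence among tests, and these are where Assumptions (A1) and (A2) enter. My second step would use (A2) to pin the threshold away from $0$ by exhibiting a deterministic level $t_m$ with $\Pr(t^*_{\widetilde{k}}\geq t_m)\to1$. The mechanism is that each location in $\mathbb{S}_\rho$, whose mean $|\mathrm{E}(z_i)|$ exceeds $(\log m)^{(1+\rho)/2}$, is rejected with probability tending to $1$ at $t_m$; a Gaussian/Mills-ratio tail computation shows the expected count of such rejections exceeds $m\,t_m/\alpha$ precisely when $|\mathbb{S}_\rho|\geq[1/(\pi^{1/2}\alpha)+\delta](\log m)^{1/2}$, forcing $\widehat{\mathrm{FDP}}^*(t_m,\widetilde{k})\leq\alpha$ and hence $t^*_{\widetilde{k}}\geq t_m$. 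My third step would use (A1) to obtain the uniform concentration $\sup_{t\in[t_m,1]}|V(t)-\mathrm{E}V(t)|/(mt)\xrightarrow{P}0$: the variance bound $\mathrm{Var}(\sum_{s\in\mathbb{S}}I(\theta(s)=0))=O(m^{1+\zeta})$ with $\zeta<1$ controls the fluctuation of the (weighted) true-null count, and combined with $\sum_{s\in\mathbb{S}}\Pr(\theta(s)=0)\geq cm$ it lets a Chebyshev/empirical-process argument upgrade the expectation bound to a uniform in-probability bound on $[t_m,1]$.

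Finally I would combine the pieces: on the high-probability event where $t^*_{\widetilde{k}}\geq t_m$ and the uniform bound holds, $\mathrm{FDP}_{\text{Procedure 1}}\leq\alpha\,V(t^*_{\widetilde{k}})/(m\,t^*_{\widetilde{k}})\leq\alpha(1+o_P(1))$, which gives $\lim_{m\to\infty}\Pr(\mathrm{FDP}_{\text{Procedure 1}}\leq\alpha+\varepsilon)=1$ for every $\varepsilon>0$; since $\mathrm{FDP}\leq1$ always, bounded convergence upgrades this to $\limsup_{m\to\infty}\mathrm{FDR}_{\text{Procedure 1}}\leq\alpha$. I expect the main obstacle to be the interplay in the second and third steps: the level $t_m$ must be small enough that the $|\mathbb{S}_\rho|$ strong signals of (A2) clear it, yet large enough that the dependence-robust concentration of (A1) holds uniformly on $[t_m,1]$, and matching the Gaussian tail bookkeeping for the strong-signal rejection probabilities to the exact constant $1/(\pi^{1/2}\alpha)$ is the delicate part.
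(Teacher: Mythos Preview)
Your proposal is correct and follows essentially the same route as the paper: the pivotal inequality $\mathrm{FDP}\leq\alpha\,V(t^*_{\widetilde{k}})/(m\,t^*_{\widetilde{k}})$, the use of (A2) together with a Mills-ratio calculation to pin the data-driven threshold above a deterministic $t_m$ (the paper does this on the $z$-scale with $t_m=(2\log m-2\log\log m)^{1/2}$ and the same constant $1/(\pi^{1/2}\alpha)$), and the use of (A1) to obtain uniform concentration of the false-rejection count over $[t_m,1]$. The only cosmetic differences are that the paper works throughout in $z$-score coordinates and, rather than carrying out the concentration step directly, invokes Theorem~1 of Genovese et al.\ (2006) for the independent case and the proof of Theorem~1 in Xia et al.\ (2019) for the dependent case to obtain the uniform bound you call $\sup_{t\in[t_m,1]}|V(t)-\mathrm{E}V(t)|/(mt)\to0$.
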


\begin{proof}[\bf Proof of Lemma 3]
	Let $\{s_1, \cdots, s_m\}$ be any order arrangement of $\{s\in\mathbb{S}\}$. We first show that Procedure 1 is capable of controlling the FDR when $\{p^*(s_i, \widetilde{k})\}^m_{i=1}$ are independent and then demonstrate that it performs asymptotically the same in the dependent case as in the independent case.
	
	\leftline{\bf Independent Case:}
	
	\par
	Let $t_m=\left(2\log m-2\log\log m\right)^{1/2}$ and $z_i=\Phi^{-1}(1-p(s_i)/2)$, for $i=1, \cdots, m$. By Assumption (A5), we have, as $m\rightarrow+\infty$,
	\[
	\Pr\left\{\sum_{\theta(s_i)=0} I( z_i \geq (c\log m)^{1/2+\rho/4})\geq \{1/(\pi^{1/2}+\delta)\}(\log m)^{1/2}\right\}\rightarrow 1,
	\]
	for some constant $c>0$. Recall that $\omega^*(s_i, \widetilde{k}) = m \left(\frac{\pi_1(s)}{1-\pi_1(s)}\right)^{1/\widetilde{k}}\left\{\sum_{s\in\mathbb{S}}\left(\frac{\pi_1(s)}{1-\pi_1(s)}\right)^{1/\widetilde{k}}\right\}^{-1}$ and $\pi_1(s)\in[\xi, 1-\xi]$, we have
	\[
	\omega^*(s_i, \widetilde{k})\geq\left(\dfrac{\xi}{1-\xi}\right)^{2/\widetilde{k}} > m^{-b},
	\]
	for some constant $b>0$. Thus, for those indices $\theta(s_i)=1$ such that $ z_i \geq (c\log m)^{1/2+\rho/4}$, we have
	\begin{eqnarray*}
		p^*(s_i, \widetilde{k}) &=&    p(s_i)/\omega^*(s_i, \widetilde{k}) \\
		&\leq&  2(1-\Phi((c\log m)^{1/2+\rho/4}))/\omega^*(s_i, \widetilde{k}) \\
		&\leq&  2m^b(1-\Phi((c\log m)^{1/2+\rho/4})) \\
		&\leq&  o(m^{-M}),
	\end{eqnarray*}
	for any constant $M>0$. Let $z^*_i=\Phi^{-1}(1-p^*(s_i, \widetilde{k})/2)$, for $i=1, \cdots, m$, then we have
	\[
	\Pr\left\{\sum^m_{i=1}I(z^{*}_i\geq (2\log m)^{1/2})\geq \left\{1/(\pi^{1/2}\alpha)+\delta\right\}^{-1}(\log m)^{1/2} \right\}\rightarrow 1,
	\]
	as $m\rightarrow+\infty$. Recall the definition of $\widehat{\mathrm{FDP}}^*(t, \widetilde{k})$ in Algorithm 5, we have
	\[
	\dfrac{m t}{\max\left\{ \sum^m_{i=1}I {(p^*(s_i, \widetilde{k})\leq t)}, 1 \right\}} = \dfrac{m G(u)}{\max\left\{\sum^m_{i=1}I(z^*_i\geq u), 1 \right\}},
	\]
	where $u=\Phi^{-1}(1-t/2)$ and $G(u)=2(1-\Phi(u))$. Let $t_m=\left(2\log m -2\log\log m\right)^{1/2}$, then, with probability tending to one, we have
	\[
	\dfrac{2m}{\sum^m_{i=1}I(z^{*}_i\geq t_m)} \leq \dfrac{2m}{\sum^m_{i=1}I(z^{*}_i\geq (2\log m)^{1/2})} \leq 2m \left\{1/(\pi^{1/2}\alpha)+\delta\right\}^{-1}(\log m)^{-1/2}.
	\]
	Note that
	\begin{eqnarray*}
		1-\Phi(t_m) &\sim&  \dfrac{1}{\sqrt{2\pi}} \cdot \dfrac{1}{t_m} \cdot \exp\left\{-\dfrac{t^2_m}{2}\right\}\\
		&=&  \dfrac{1}{\sqrt{2\pi}} \cdot \dfrac{1}{(2\log m-2\log\log m)^{1/2}} \cdot \dfrac{\log m}{m}.
	\end{eqnarray*}
	Then, with probability tending to one, we have
	\begin{eqnarray*}
		\dfrac{m G(t_m)}{\sum^m_{i=1}I(z^*_i\geq t_m)} &=&  \dfrac{2m}{\sum^m_{i=1}I(z^{*}_i\geq t_m)} \cdot \left(1-\Phi(t_m)\right)\\
		&\leq& \dfrac{ 2m }{\left\{1/(\pi^{1/2}\alpha)+\delta\right\}(\log m)^{1/2}} \cdot \dfrac{\log m}{\sqrt{2\pi}(2\log m-2\log\log m)^{1/2}m}\\
		&\leq& \alpha.
	\end{eqnarray*}
	Since the threshold for $z^*_i$ is selected by $u^*=\inf\left\{u\geq0: {m G(u)}/{\max\left\{\sum^m_{i=1}I(z^*_i\geq u), 1 \right\}}\leq \alpha\right\}$, it suffices to show that, uniformly in $0\leq u\leq t_m$, there exists a constant $0<c_0\leq 1$, such that
	\[
	\left|\dfrac{\sum_{\theta(s_i)=0}I(z^*_i>u)-c_0 m_0 G(u)}{c_0 m_0 G(u)}\right|\rightarrow0, \eqno{(\mathrm{A}.7)}
	\]
	in probability, where $m_0=\left|\{s\in \mathbb{S}: \theta(s)=0\}\right|$.
	
	\par
	To control the FDR at level $\alpha$, the ideal choice of the threshold $u^{o}$ for $z^{*}_i$ is
	\[
	u^{o} =\inf \left\{u\geq 0: \dfrac{\sum_{\theta(s_i)=0} I(z^{*}_i\geq u)}{\sum^m_{i=1}I(z^{*}_i)\geq u}\leq\alpha \right\}.
	\]
	By the law of large numbers, it is clear that,
	\[
	\left| \dfrac{\sum_{\theta(s_i)=0}I(z^{*}_i\geq u)-\sum_{\theta(s_i)=0}\Pr(z^{*}_i\geq u)}{\sum_{\theta(s_i)=0}\Pr(z^{*}_i\geq u)}\right|\rightarrow 0, \text{~as~} m\rightarrow \infty. \eqno{(\mathrm{A}.8)}
	\]
	Thus we can obtain a good estimate of $u^{o}$, that is,
	\[
	\widehat{u}^{o}=\inf\left\{u\geq0, \dfrac{\sum_{\theta(s_i)=0}\Pr(z^{*}_i\geq u)}{\sum^m_{i=1}I(z^{*}_i\geq u)}\leq\alpha\right\}.
	\]
	Note that the original $p$-values under the null are uniformly distributed, which is not dependent on the spatial locations $\{s\in\mathbb{S}\}$. According to Theorem 1 in Genovese et al. (2006), Procedure 1 controls the FDR at level $\alpha m_0/m$. That is, let
	\[
	l^*_{\widetilde{k}}=\max\left\{1\leq i\leq m: p^{*(i)}(\widetilde{k}) \leq \alpha i/m \right\},
	\]
	then we have
	\[
	\mathrm{E} \left(\dfrac{\sum_{\theta(s)=0} I(p^*(s, \widetilde{k})\leq p^{*(l^*_{\widetilde{k}})}(\widetilde{k}))}{\max\{\sum^m_{i=1} I(p^*(s, \widetilde{k})\leq p^{*(l^*_{\widetilde{k}})}(\widetilde{k})), 1\}}\right)\leq \alpha m_0/m.
	\]
	By the definition of $z^{*}_i$, Procedure 1 is equivalent to reject all hypotheses with
	\[
	\dfrac{2m(1-\Phi(z^{*}_{(i)}))}{i}\leq\alpha,
	\]
	where $z^{*}_{(i)}=\Phi^{-1}(1-p^{*(i)}(\widetilde{k})/2)$. That is to find
	\[
	u^*=\inf\left\{u\geq0: \dfrac{m G(u)}{\max\left\{\sum^m_{i=1}I(z^*_i\geq u), 1 \right\}}\leq \alpha\right\},  \eqno{(\mathrm{A}.9)}
	\]
	and reject all hypotheses with $z^{*}_i\geq u^*$. Then we have
	\[
	\mathrm{E}\left(\dfrac{\sum_{\theta(s_i)=0}I(z^{*}_i\geq u^*)}{\sum^m_{i=1}I(z^{*}_i\geq u^*)}\right)\leq\alpha m_0/m.
	\]
	By (A.8) and the definitions of $u^o$ and $\widehat{u}^{o}$, we have
	\[
	\mathrm{E}\left(\dfrac{\sum_{\theta(s_i)=0}I(z^{*}_i\geq \widehat{u}^{o})}{\sum^m_{i=1}I(z^{*}_i\geq \widehat{u}^{o})}\right)\rightarrow \alpha, \text{~as~} m\rightarrow \infty.
	\]
	Hence, the procedure based on the threshold $u^*$ is is more conservative than that based on $\widehat{t}^{o}$. Thus, there exists a constant $0<c_0\leq 1$, such that, uniformly in $0\leq u\leq t_m$,
	\[
	\left|\dfrac{\sum_{\theta(s_i)=0}\Pr(z^*_i>u)-c_0 m_0 G(u)}{c_0 m_0 G(u)}\right|\rightarrow0. \eqno{(\mathrm{A}.10)}
	\]
	Combining (A.8) and (A.10), (A.7) is proved.
	
	\leftline{\bf Dependent Case:}
	Note that
	\[
	\Phi^{-1}\{1-[1-\Phi\{(c_1 \log m + c_2 \log\log m)^{1/2}\}]/\omega^*(s, \widetilde{k})\} = c_1 \log m +c_2 \log\log m +c_3,
	\]
	for some constsnt $c_1, c_2, c_3$. By Assumption (A1) and the proof of Theorem 1 in Xia et al. (2019), we have, as $m\rightarrow+\infty$,
	\[
	\left|\dfrac{\sum_{\theta(s_i)=0}(I(z^{*}_i\geq u)-\Pr(z^{*}_i\geq u))}{\sum_{\theta(s_i)=0}\Pr(z^{*}_i\geq u)}\right|\rightarrow0, \eqno{(\mathrm{A}.11)}
	\]
	in probability, uniformly in $0\leq t\leq t_m$. By (A.10) and (A.11), (A.7) is proved. This completes the proof of Lemma 3.
\end{proof}

\begin{proof}[\bf Proof of Theorem 1]
	By Lemma 1, we have that the oracle STRAW procedure is equivalent to Algorithm 3, which is to reject the null hypotheses with $p_{\text{weighted}}(s, \widetilde{k})\leq \widehat{t}_{\widetilde{k}}$ for $s\in\mathbb{S}$, where
	\[
	\widehat{t}_{\widetilde{k}}=\sup\left\{t\geq0: \dfrac{\sum_{s\in\mathbb{S}}(1-\pi_1(s))^{1-1/\widetilde{k}} \pi_1(s)^{1/\widetilde{k}} t}{\max\left\{ \sum_{s\in\mathbb{S}}I {(p_{\text{weighted}}(s, \widetilde{k})\leq t)}, 1 \right\}}\leq \alpha\right\}.
	\]
	From the definition of $\widehat{t}_{\widetilde{k}}$, it is clear that proving Theorem 1 only requires proving the conclusion that for all $t\geq\widehat{t}_{\widetilde{k}}$, there exists a constant $0<c\leq1$, as $m\rightarrow\infty$,
	\[
	\left| \dfrac{\sum_{s\in\mathbb{S}}\left[I(p_{\text{weighted}}(s, \widetilde{k})\leq t, \theta(s)=0)- c(1-\pi_1(s))^{1-1/\widetilde{k}} \pi_1(s)^{1/\widetilde{k}} t\right]}{\sum_{s\in\mathbb{S}} c(1-\pi_1(s))^{1-1/\widetilde{k}} \pi_1(s)^{1/\widetilde{k}} t} \right| \rightarrow 0,
	\]
	in probability.

	\par
	{\bf Step 1:} We first show that for all $t\geq\widehat{t}_{\widetilde{k}}$, as $m\rightarrow\infty$,
	\[
	\left| \dfrac{\sum_{\theta(s)=0} \left[I(p_{\text{weighted}}(s, \widetilde{k}) \leq t) - \Pr(p_{\text{weighted}}(s, \widetilde{k}) \leq t \mid \theta(s)=0) \right]}{\sum_{\theta(s)=0}\Pr(p_{\text{weighted}}(s, \widetilde{k}) \leq t, \theta(s)=0)} \right| \rightarrow 0, \eqno{(\mathrm{A}.12)}
	\]
	in probability. It follows from Lemma 2 that Procedure 1 is equivalent to Algorithm 5, which is to reject the null hypotheses with $p^*(s, \widetilde{k})\leq t^*_{\widetilde{k}}$ for $s\in\mathbb{S}$, where
	\[
	t^*_{\widetilde{k}}=\sup\left\{t\geq0: \dfrac{m t}{\max\left\{ \sum_{s\in\mathbb{S}}I {(p^*(s, \widetilde{k})\leq t)}, 1 \right\}}\leq \alpha\right\}.
	\]
	Note that $p^*(s, \widetilde{k}) = m^{-1} \left\{\sum_{s\in\mathbb{S}} \left(\frac{\pi_1(s)}{1-\pi_1(s)}\right)^{1/\widetilde{k}}\right\} p_{\text{weighted}}(s, \widetilde{k})$, where $m^{-1} \left\{\sum_{s\in\mathbb{S}} \left(\frac{\pi_1(s)}{1-\pi_1(s)}\right)^{1/\widetilde{k}}\right\}$ is a positive constant. Thus the weighted $p$-values $\left\{p_{\text{weighted}}(s, \widetilde{k})\right\}_{s\in\mathbb{S}}$ and the rescaled weighted $p$-values $\left\{p^*(s, \widetilde{k})\right\}_{s\in\mathbb{S}}$ are exactly in the same order. Let $\widetilde{t}=m \left\{\sum_{s\in\mathbb{S}} \left(\frac{\pi_1(s)}{1-\pi_1(s)}\right)^{1/\widetilde{k}}\right\}^{-1} t$. Then we have that
	\[
	\dfrac{m t}{\max\left\{ \sum_{s\in\mathbb{S}}I {(p^*(s, \widetilde{k})\leq t)}, 1 \right\}} = \dfrac{\sum_{s\in\mathbb{S}} \left(\frac{\pi_1(s)}{1-\pi_1(s)}\right)^{1/\widetilde{k}}\widetilde{t}}{\max\left\{ \sum_{s\in\mathbb{S}}I {(p_{\text{weighted}}(s, \widetilde{k})\leq \widetilde{t})}, 1 \right\}}.
	\]
	Therefore, in Procedure 1, the threshold values for $p_{\text{weighted}}(s, \widetilde{k})$ and $p^*(s, \widetilde{k})$ are
	\[
	t^1_{\widetilde{k}}=\sup\left\{t\geq0: \dfrac{\sum_{s\in\mathbb{S}} \left(\frac{\pi_1(s)}{1-\pi_1(s)}\right)^{1/\widetilde{k}}t}{\max\left\{ \sum_{s\in\mathbb{S}}I {(p_{\text{weighted}}(s, \widetilde{k})\leq t)}, 1 \right\}}\leq \alpha\right\},
	\]
	and
	\[
	t^{1*}_{\widetilde{k}}=m^{-1} \sum_{s\in\mathbb{S}} \left(\frac{\pi_1(s)}{1-\pi_1(s)}\right)^{1/\widetilde{k}}t^1_{\widetilde{k}},
	\]
	respectively. Recalling the definition of $\widehat{t}_{\widetilde{k}}$, it is clear that $t^1_{\widetilde{k}}\leq \widehat{t}_{\widetilde{k}}$. By Lemma 3, we can conclude that for all $t\geq t^{1*}_{\widetilde{k}}$, as $m\rightarrow\infty$,
	\[
	\left| \dfrac{\sum_{\theta(s)=0} \left[I(p^*(s, \widetilde{k}) \leq t) - \Pr(p^*(s, \widetilde{k}) \leq t \mid \theta(s)=0) \right]}{\sum_{\theta(s)=0}\Pr(p^*(s, \widetilde{k}) \leq t \mid \theta(s)=0)} \right| \rightarrow 0,
	\]
	in probability. This implies that for all $t\geq t^{1}_{\widetilde{k}}$, as $m\rightarrow\infty$,
	\[
	\left| \dfrac{\sum_{\theta(s)=0} \left[I(p_{\text{weighted}}(s, \widetilde{k}) \leq t) - \Pr(p_{\text{weighted}}(s, \widetilde{k}) \leq t \mid \theta(s)=0) \right]}{\sum_{\theta(s)=0}\Pr(p_{\text{weighted}}(s, \widetilde{k}) \leq t \mid \theta(s)=0)} \right| \rightarrow 0,
	\]
	in probability. Hence (A.12) is proved.
	
	\par
	{\bf Step 2:} Next we show that, as $m\rightarrow\infty$,
	\[
	\left|\dfrac{\sum_{\theta(s)=0}\Pr(p_{\text{weighted}}(s, \widetilde{k}) \leq t \mid \theta(s)=0) - \sum_{s\in\mathbb{S}}\Pr(p_{\text{weighted}}(s, \widetilde{k}) \leq t, \theta(s)=0)}{\sum_{s\in\mathbb{S}}\Pr(p_{\text{weighted}}(s, \widetilde{k}) \leq t, \theta(s)=0)} \right| \rightarrow 0, \eqno{(\mathrm{A}.13)}
	\]
	in probability. It follows from Assumption (A2) that
	\begin{eqnarray*}
		&&  \mathrm{E} \left(\left| \dfrac{\sum_{\theta(s)=0} \Pr(p_{\text{weighted}}(s, \widetilde{k}) \leq t \mid \theta(s)=0) - \sum_{s\in\mathbb{S}}\Pr(p_{\text{weighted}}(s, \widetilde{k}) \leq t, \theta(s)=0)}{\sum_{s\in\mathbb{S}}\Pr(p_{\text{weighted}}(s, \widetilde{k}) \leq t, \theta(s)=0)} \right|^2\right) \\
		&=& \mathrm{E} \left(\left| \dfrac{\sum_{s\in\mathbb{S}}\Pr(p_{\text{weighted}}(s, \widetilde{k}) \leq t \mid \theta(s)=0)\left[ I(\theta(s)=0)-\Pr(\theta(s)=0) \right]}{\sum_{s\in\mathbb{S}}\Pr(p_{\text{weighted}}(s, \widetilde{k}) \leq t, \theta(s)=0)} \right|^2\right)\\
		&=& \dfrac{\mathrm{Var}\left(\sum_{s\in\mathbb{S}}\Pr(p_{\text{weighted}}(s, \widetilde{k}) \leq t \mid \theta(s)=0)I(\theta(s)=0)\right)}{\left(\sum_{s\in\mathbb{S}}\Pr(p_{\text{weighted}}(s, \widetilde{k}) \leq t, \theta(s)=0)\right)^2 }\\
		&=& O(m^{\zeta-1}),
	\end{eqnarray*}
	where $0\leq\zeta<1$. Hence (A.13) is proved.
	
	\par
	{\bf Step 3:} Finally, we show that for all $t\geq\widehat{t}_{\widetilde{k}}$, as $m\rightarrow\infty$,
	\[
	\left| \dfrac{\sum_{s\in\mathbb{S}} \left[I(p_{\text{weighted}}(s, \widetilde{k}) \leq t, \theta(s)=0) - \Pr(p_{\text{weighted}}(s, \widetilde{k}) \leq t, \theta(s)=0)\right]}{\sum_{s\in\mathbb{S}}\Pr(p_{\text{weighted}}(s, \widetilde{k}) \leq t, \theta(s)=0)} \right| \rightarrow 0, \eqno{(\mathrm{A}.14)}
	\]
	in probability. Combining (A.1) and (A.2), we have that for all $t\geq\widehat{t}_{\widetilde{k}}$, as $m\rightarrow\infty$,
	\[
	\left| \dfrac{\sum_{\theta(s)=0} \left[I(p_{\text{weighted}}(s, \widetilde{k}) \leq t) - \Pr(p_{\text{weighted}}(s, \widetilde{k}) \leq t \mid \theta(s)=0) \right]}{\sum_{s\in\mathbb{S}}\Pr(p_{\text{weighted}}(s, \widetilde{k}) \leq t, \theta(s)=0)} \right| \rightarrow 0, \eqno{(\mathrm{A}.15)}
	\]
	in probability. By using triangular inequality, we can obtain
	\begin{eqnarray*}
		& & \left| \dfrac{\sum_{s\in\mathbb{S}} \left[I(p_{\text{weighted}}(s, \widetilde{k}) \leq t, \theta(s)=0) - \Pr(p_{\text{weighted}}(s, \widetilde{k}) \leq t, \theta(s)=0)\right]}{\sum_{s\in\mathbb{S}}\Pr(p_{\text{weighted}}(s, \widetilde{k}) \leq t, \theta(s)=0)} \right| \\
		&\leq& \left| \dfrac{\sum_{\theta(s)=0} \left[I(p_{\text{weighted}}(s, \widetilde{k}) \leq t) - \Pr(p_{\text{weighted}}(s, \widetilde{k}) \leq t \mid \theta(s)=0) \right]}{\sum_{s\in\mathbb{S}}\Pr(p_{\text{weighted}}(s, \widetilde{k}) \leq t, \theta(s)=0)} \right|\\
		&&+ \left|\dfrac{\sum_{\theta(s)=0}\Pr(p_{\text{weighted}}(s, \widetilde{k}) \leq t \mid \theta(s)=0) - \sum_{s\in\mathbb{S}}\Pr(p_{\text{weighted}}(s, \widetilde{k}) \leq t, \theta(s)=0)}{\sum_{s\in\mathbb{S}}\Pr(p_{\text{weighted}}(s, \widetilde{k}) \leq t, \theta(s)=0)} \right|.
	\end{eqnarray*}
	Combining (A.12) and (A.13), we can conclude the result of (A.14). Note that
	\begin{eqnarray*}
		\sum_{s\in\mathbb{S}}\Pr(p_{\text{weighted}}(s, \widetilde{k}) \leq t, \theta(s)=0) &=& \sum_{s\in\mathbb{S}}\Pr(p_{\text{weighted}}(s, \widetilde{k}) \leq t \mid \theta(s)=0)\Pr(\theta(s)=0) \\
		&\leq& \sum_{s\in\mathbb{S}} \left(\frac{\pi_1(s)}{1-\pi_1(s)}\right)^{1/\widetilde{k}}(1-\pi_1(s))t\\
		&=& \sum_{s\in\mathbb{S}} (1-\pi_1(s))^{1-1/\widetilde{k}} \pi_1(s)^{1/\widetilde{k}} t.
	\end{eqnarray*}
	This together with (A.14) prove the result of Theorem 1.
\end{proof}

\subsection*{A.2 Proof of Theorem 2}

\begin{proof}
	Recall the definitions of $\mathrm{mFDR} (\boldsymbol{\delta}_{\text{rescaled}}^{\widetilde{k}}(t))$ and $\mathrm{mFDR} (\boldsymbol{\delta}(t))$. It is easy to see that, to prove $\mathrm{mFDR} (\boldsymbol{\delta}_{\text{rescaled}}^{\widetilde{k}}(t^{1}_{\text{OR}})) \leq \mathrm{mFDR} (\boldsymbol{\delta}(t^{1}_{\text{OR}}))$, it suffices to show that
	\[
	\mathrm{ETP} (\boldsymbol{\delta}_{\text{rescaled}}^{\widetilde{k}}(t))\geq \mathrm{ETP} (\boldsymbol{\delta}(t)),\eqno{(\mathrm{A}.16)}
	\]
	that is, $\sum_{i=1}^m \pi_1(s_i) F_{1, i}(\varphi_{\widetilde{k}}(\pi_1(s_i))\omega_{\widetilde{k}}t)\geq \sum_{i=1}^m \pi_1(s_i) F_{1, i} (t)$. By Assumptions (A3) and (A4), we have that
	{\small
		\begin{eqnarray*}
			& & \sum_{i=1}^m \pi_1(s_i) F_{1, i}(\varphi_{\widetilde{k}}(\pi_1(s_i))\omega_{\widetilde{k}}t)\\
			&=& \sum_{i=1}^m \pi_{1}(s_i) \sum_{i=1}^m \frac{\pi_{1}(s_i)}{\sum_{i=1}^m \pi_{1}(s_i)} F_{1, i}\left(\dfrac{t}{\varphi_{\widetilde{k}}^{-1}(\pi_1(s_i))\omega_{\widetilde{k}}^{-1}}\right)\\
			&\geq& \sum_{i=1}^m \pi_{1}(s_i) F_{1, i} \left(\dfrac{t}{\sum_{i=1}^m \frac{\pi_{1}(s_i)}{\sum_{i=1}^m \pi_{1}(s_i)}\varphi_{\widetilde{k}}^{-1}(\pi_1(s_i))\omega_{\widetilde{k}}^{-1}}\right)\\
			&=& \sum_{i=1}^m \pi_{1}(s_i) F_{1, i} \left( \dfrac{\sum_{i=1}^m (1-\pi_{1}(s_i))\sum_{i=1}^m \pi_{1}(s_i)t}{\sum_{i=1}^m \left[(1-\pi_{1}(s_i))^{1-1/\widetilde{k}}\pi_{1}(s_i)^{1/\widetilde{k}}\right]\sum_{i=1}^m \left[(1-\pi_{1}(s_i))^{1/\widetilde{k}}\pi_{1}(s_i)^{1-1/\widetilde{k}}\right]} \right)\\
			&\geq& \sum_{i=1}^m \pi_1(s_i) F_{1, i} (t).
		\end{eqnarray*}
	}
	Hence (A.16) is proved. Combining (A.16) and the definition of $t^{1}_{\text{OR}}$, we have that
	\[
	\mathrm{mFDR} (\boldsymbol{\delta}_{\text{rescaled}}^{\widetilde{k}}(t^{1}_{\text{OR}})) \leq \mathrm{mFDR} (\boldsymbol{\delta}(t^{1}_{\text{OR}})) \leq \alpha.
	\]
	By the definition of $t^{\widetilde{k}}_{\text{OR}}$, we have that $t^{\widetilde{k}}_{\text{OR}}\geq t^{1}_{\text{OR}}$. This together with (A.16) yield that
	\[
	\mathrm{ETP} (\boldsymbol{\delta}_{\text{rescaled}}^{\widetilde{k}}(t^{\widetilde{k}}_{\text{OR}})) \geq \mathrm{ETP} (\boldsymbol{\delta}_{\text{rescaled}}(t^{1}_{\text{OR}})) \geq \mathrm{ETP} (\boldsymbol{\delta}(t^{1}_{\text{OR}})).
	\]
\end{proof}

\subsection*{A.3 Estimation of Local Sparsity Levels}

We conduct simulations to evaluate the estimation of the local sparsity levels via the kernel-smooth Lfdr statistic. Consider the one-dimensional settings described in Section 3.1 with $\pi_{1d}$ fixed at $0.6$ and $\mu$ fixed at $2$. The estimation of the local sparsity levels are displayed in Figure 5 and the kernel-smooth method is denoted by `STRAW'. We can see that both estimation methods of LAWS and STRAW are somewhat conservative, while STRAW better estimates the spatial structure with a flat peak in the estimate curve of $\pi_{1d}$.

\begin{figure}[htp]
	\centering
	\includegraphics[height=1.8in,width=6in]{./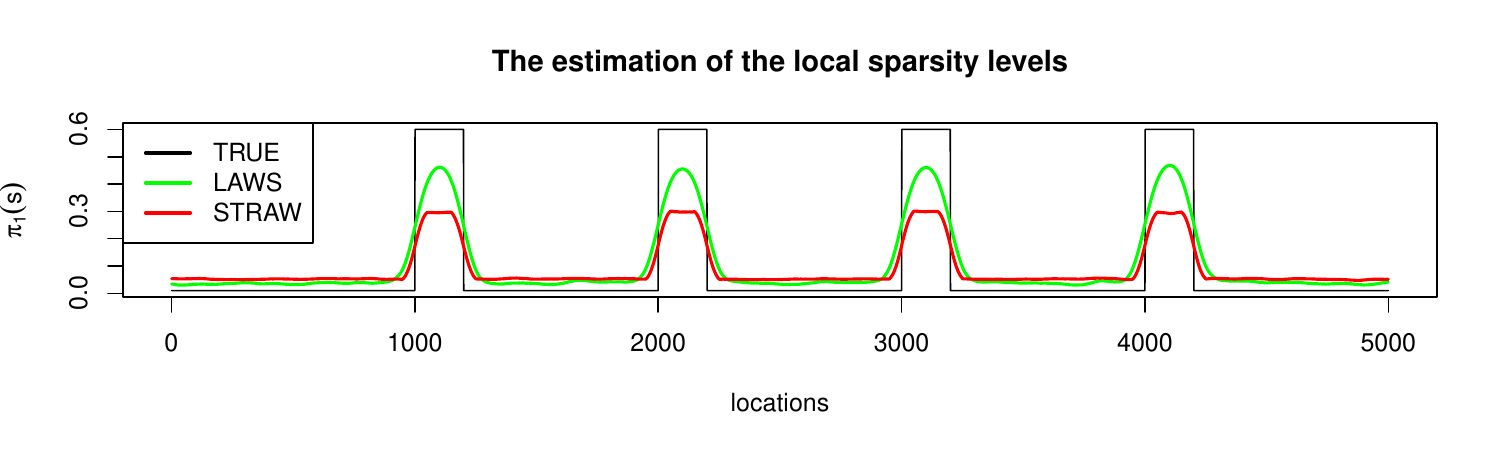}\\[-5mm]
	\caption{\footnotesize{True $\pi_1(s)$ v.s. estimated $\pi_1(s)$. }}
	\label{fig:4}
\end{figure}

\subsection*{A.4 Simulation Studies in Three-Dimensional Settings}

\par
Consider simultaneous testing of $10000=20\times20\times25$ null hypotheses located at a cubic region of three-dimensional Euclidean space. The local sparsity levels $\{\pi_1(s)\}_{20\times20\times25}$ are set to
\begin{eqnarray*}
	\pi_1(s) &=& \pi_{3d}, \text{~for~} s\in\{6, 7, \cdots, 15\}\times\{11, 12, \cdots, 20\}\times\{11, 12, \cdots, 20\};\\
	\pi_1(s) &=& 0.01, \text{~for~other~locations},
\end{eqnarray*}
Let $\{\theta(s)\}_{20\times20\times25}$ and $\{X(s)\}_{20\times20\times25}$ be generated from
\begin{eqnarray*}
	\theta(s)          &\sim& \text{Bernoulli}(\pi_1(s)), \text{~for~} s\in\{1,\cdots,20\}\times\{1,\cdots,20\}\times\{1,\cdots,25\},\\
	X(s)\mid\theta(s)  &\sim& (1-\theta(s))N(0, 1) + \theta(s)N(\mu, 1),\\
	&&~~~~~~~~~~~~~~~~\text{~for~} s\in\{1,\cdots,20\}\times\{1,\cdots,20\}\times\{1,\cdots,25\}.
\end{eqnarray*}
Consider the following two scenarios.

\par
{\bf Scenario 5:} fix $\pi_{3d}=0.6$ and change $\mu$ from $1.5$ to $2.0$.

\par
{\bf Scenario 6:} fix $\mu=2.0$ and change $\pi_{3d}$ from $0.4$ to $0.6$.

\par
The corresponding simulation results are displayed in Figure 6. Since the simulation results are consistent with those in one-dimensional and two-dimensional settings, we will not go into further detail.

\begin{figure}[htp]
	\centering
	\begin{subfigure}{0.48\textwidth}
		\includegraphics[width=2.2in]{./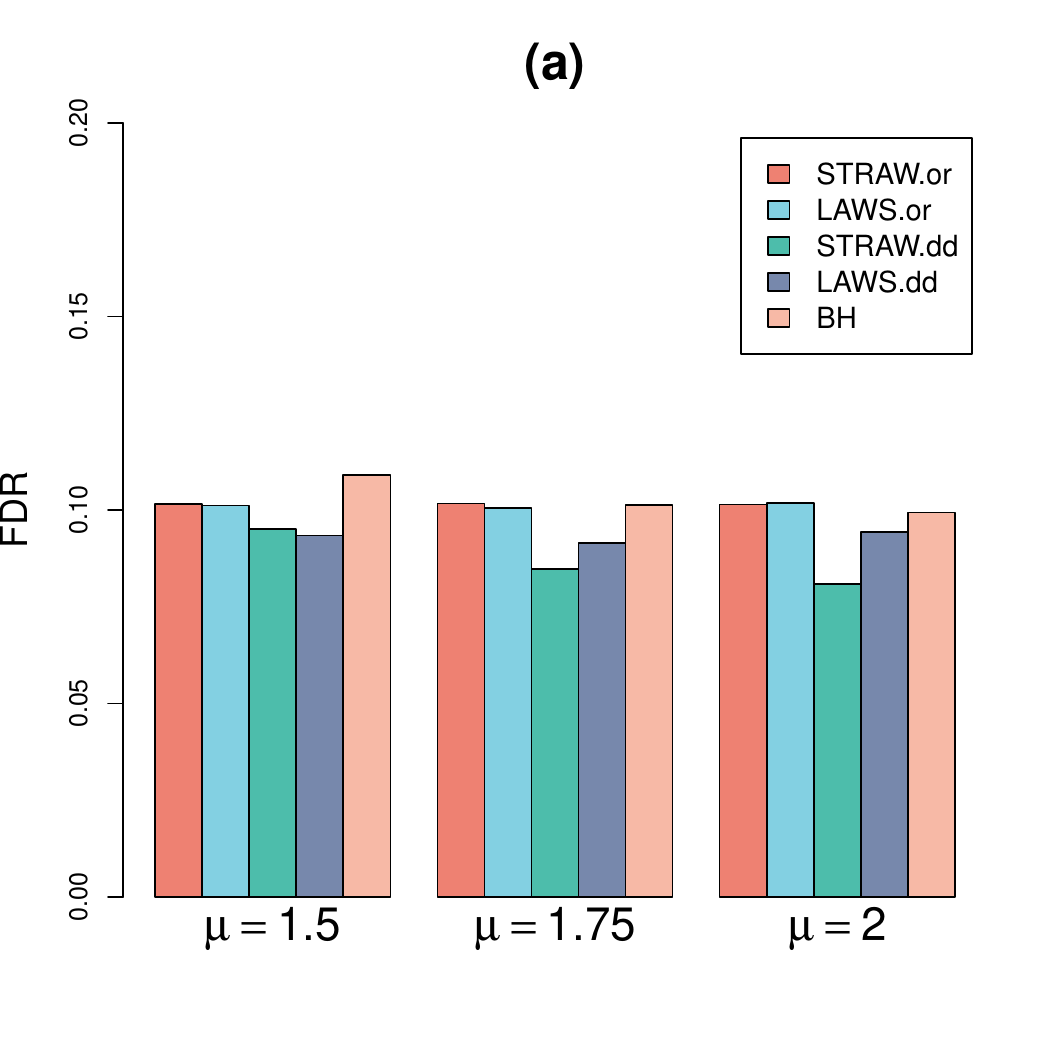}
	\end{subfigure}
	\begin{subfigure}{0.48\textwidth}
		\includegraphics[width=2.2in]{./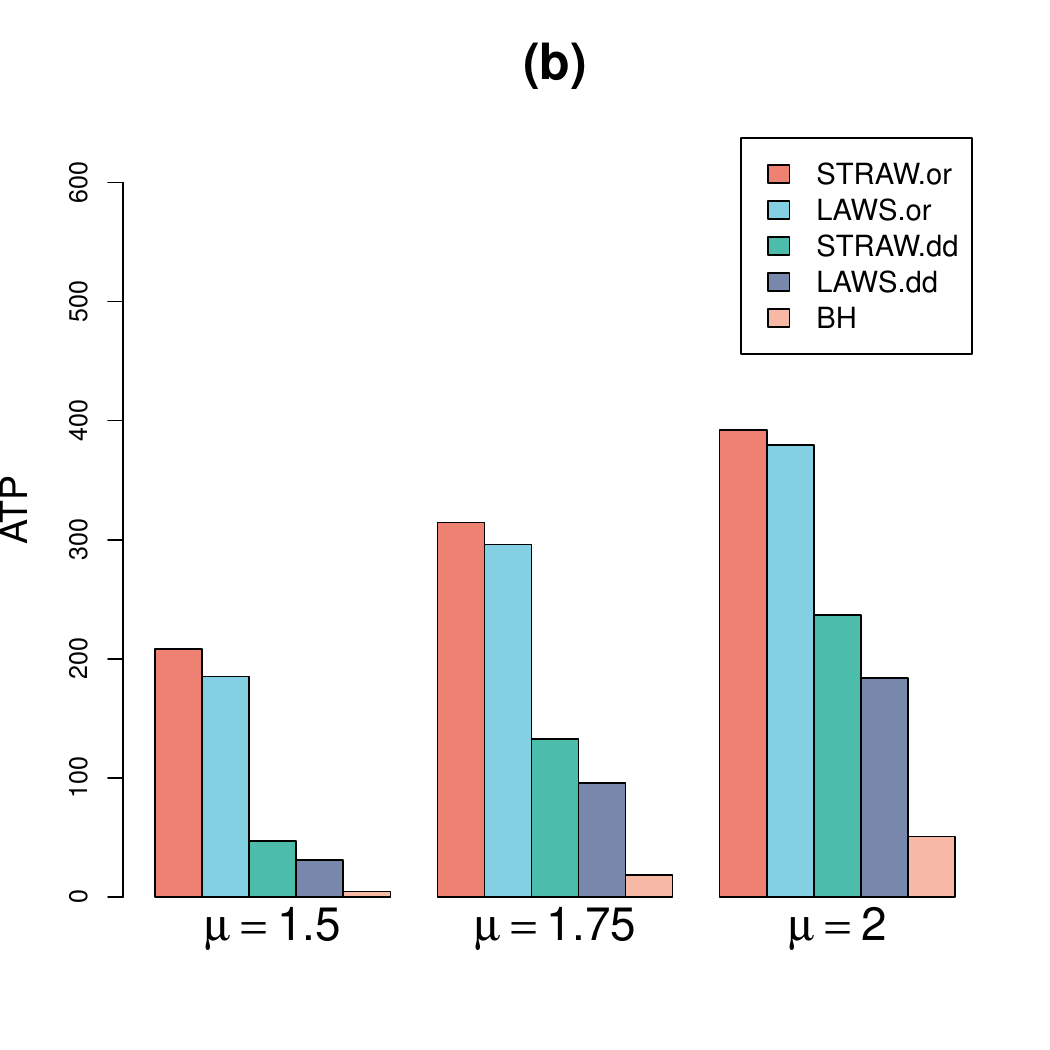}
	\end{subfigure}\\
\begin{subfigure}{0.48\textwidth}
	\includegraphics[width=2.2in]{./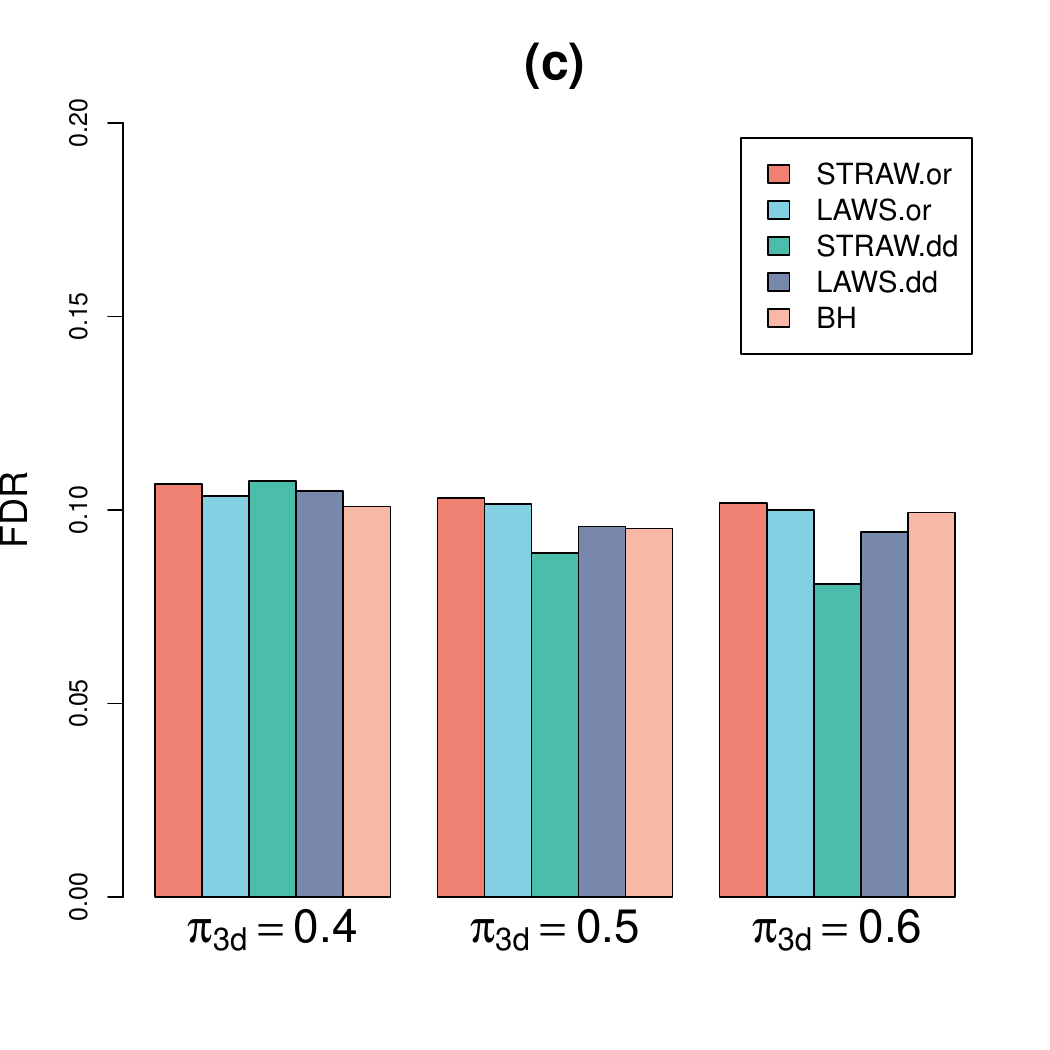}
\end{subfigure}
	\begin{subfigure}{0.48\textwidth}
		\includegraphics[width=2.2in]{./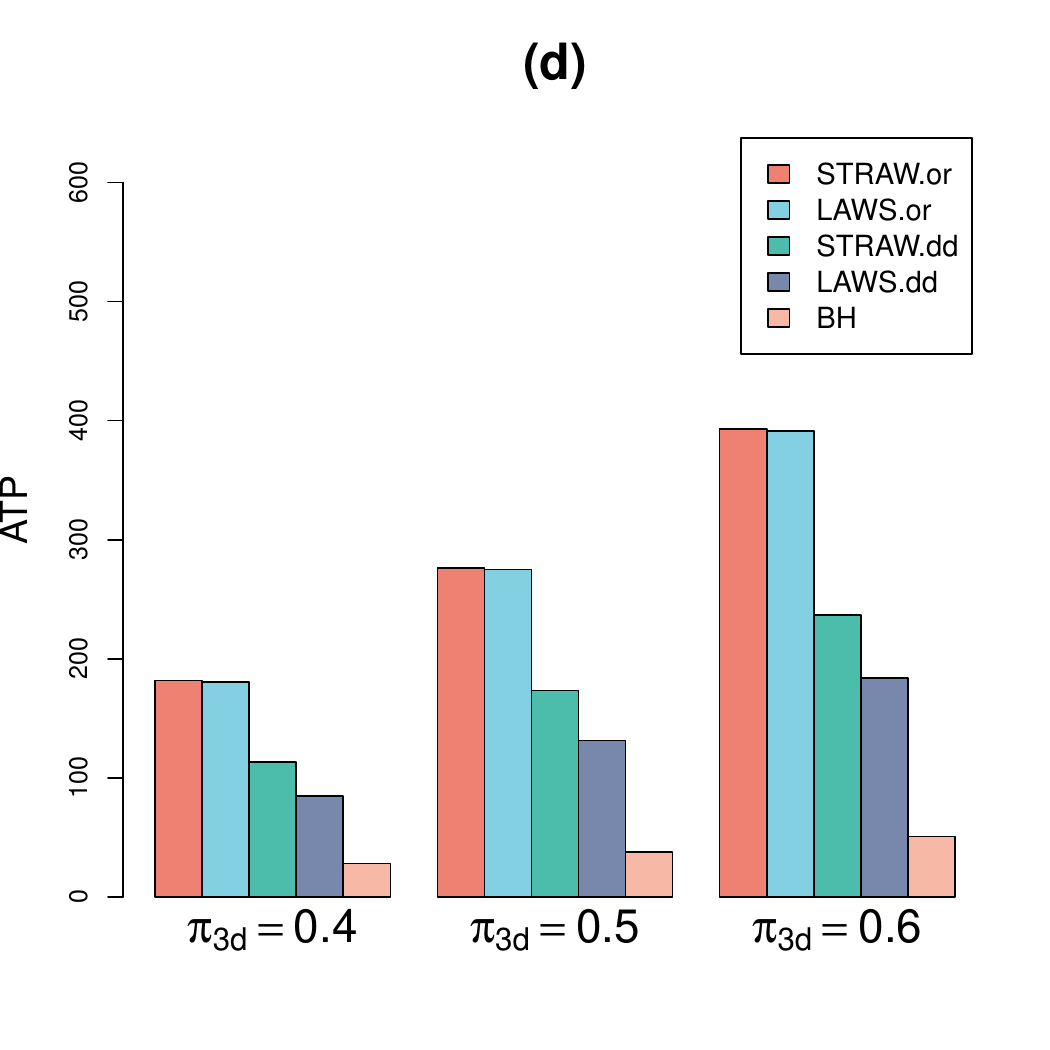}
	\end{subfigure}
	\centering
	\caption{ \footnotesize Simulation results in three-dimensional setting: (a)-(b) simulation results in Scenario 5; (c)-(d) simulation results in Scenario 6.}
	\label{fig:6}
\end{figure}

\newpage
\section*{References}

\begin{description}
	\item Benjamini, Y., \& Hochberg, Y. (1995). Controlling the false discovery rate: a practical and powerful approach to multiple testing. {\sl Journal of the Royal Statistical Society: Series B: Statistical Methodological}, {\bf 57}(1), 289--300.
	\item Benjamini, Y. \& Hochberg, Y. (2000). On the adaptive control of the false discovery rate in multiple testing with independent statistics. {\sl Journal of Educational and Behavioral Statistics}, {\bf 25}(1), 60--83.
	\item Cai, T. T., Sun, W., \& Xia, Y. (2022). LAWS: A locally adaptive weighting and screening approach to spatial multiple testing. {\sl Journal of the American Statistical Association}, {\bf 117}(539), 1370--1383.
	\item Cao, H., Chen, J., \& Zhang, X. (2022). Optimal false discovery rate control for large scale multiple testing with auxiliary information. {\sl The Annals of Statistics}, {\bf 50}(2), 807--857.
	\item Cui, T., Wang, P., \& Zhu, W. (2021). Covariate-adjusted multiple testing in genome-wide association studies via factorial hidden Markov models. {\sl Test}, {\bf 30}(3), 737--757.
	\item Deng, L., He, K., \& Zhang, X. (2022). Powerful spatial multiple testing via borrowing neighboring information. {\sl arXiv preprint arXiv:2210.17121}.
	\item Efron, B., Tibshirani, R., Storey, J. D., \& Tusher, V. (2001). Empirical Bayes analysis of a microarray experiment. {\sl Journal of the American Statistical Association}, {\bf 96}(456), 1151--1160.
	\item G$\ddot{\text{o}}$lz, M., Zoubir, A. M., \& Koivunen, V. (2022). Improving inference for spatial signals by contextual false discovery rates. {\sl In ICASSP 2022-2022 IEEE International Conference on Acoustics, Speech and Signal Processing (ICASSP)}, 5967--5971.
	\item Hu, J. X., Zhao, H., \& Zhou, H. H. (2010). False discovery rate control with groups. {\sl Journal of the American Statistical Association}, {\bf 105}(491), 1215--1227.
	\item Ignatiadis, N., Klaus, B., Zaugg, J. B., \& Huber, W. (2016). Data-driven hypothesis weighting increases detection power in genome-scale multiple testing. {\sl Nature Methods}, {\bf 13}(7), 577--580.
	\item Kuan, P. F. \& Chiang, D. Y. (2012). Integrating prior knowledge in multiple testing under dependence with applications to detecting differential DNA methylation. {\sl Biometrics}, {\bf 68}(3), 774--783.
	\item Li, A. \& Barber, R. F. (2019). Multiple testing with the structure-adaptive Benjamini?Hochberg algorithm. {\sl Journal of the Royal Statistical Society Series B: Statistical Methodology}, {\bf 81}(1), 45--74.
	\item Liang, Z., Cai, T. T., Sun, W., \& Xia, Y. (2022). Locally adaptive transfer learning algorithms for large-scale multiple testing. {\sl arXiv preprint arXiv:2203.11461}.
	\item Sesia, M., Bates, S., Cand{\`e}s, E., Marchini, J., and Sabatti, C. (2021). False discovery rate control in genome-wide association studies with population structure. {\sl Proceedings of the National Academy of Sciences}, {\bf 118}(40), e2105841118.
	\item Stein, M. L. (1999). Interpolation of spatial data: some theory for kriging. {\sl Springer Science \& Business Media}.
	\item Storey, J. D. (2002). A direct approach to false discovery rates. {\sl Journal of the Royal Statistical Society Series B: Statistical Methodology}, {\bf 64}(3), 479--498.
	\item Sun, W. \& Cai, T. T. (2007). Oracle and adaptive compound decision rules for false discovery rate control. {\sl Journal of the American Statistical Association}, {\bf 102}(479), 901--912.
	\item Sun, W. \& Cai, T. T. (2009). Large-scale multiple testing under dependence. {\sl Journal of the Royal Statistical Society Series B: Statistical Methodology}, {\bf 71}(2), 393--424.
	\item Sun, W., Reich, B. J., Cai, T. T., Guindani, M., \& Schwartzman, A. (2015). False discovery control in large-scale spatial multiple testing. {\sl Journal of the Royal Statistical Society Series B: Statistical Methodology}, {\bf 77}(1), 59?-83.
	\item Tansey, W., Koyejo, O., Poldrack, R. A., \& Scott, J. G. (2018). False discovery rate smoothing. {\sl Journal of the American Statistical Association}, {\bf 113}(523), 1156--1171.
	\item Wang, P. \& Zhu, W. (2019). Replicability analysis in genome-wide association studies via Cartesian hidden Markov models. {\sl BMC Bioinformatics}, {\bf 20}(1), 146.
	\item Wang, X., Shojaie, A., \& Zou, J. (2019). Bayesian hidden Markov models for dependent large-scale multiple testing. {\sl Computational Statistics and Data Analysis}, {\bf 136}, 123--136.
	\item Wei, Z., Sun, W., Wang, K., \& Hakonarson, H. (2009). Multiple testing in genome-wide association studies via hidden Markov models. {\sl Bioinformatics}, {\bf 25}(21), 2802--2808.
	\item Xia, Y., Cai, T. T., \& Sun, W. (2020). GAP: a general framework for information pooling in two-sample sparse inference. {\sl Journal of the American Statistical Association}, {\bf 115}, 1236?1250.
	\item Yun, S., Zhang, X., \& Li, B. (2022). Detection of local differences in spatial characteristics between two spatiotemporal random fields. {\sl Journal of the American Statistical Association}, {\bf 117}(537), 291--306.
	\item Zang, Y. F., He, Y., Zhu, C. Z., Cao, Q. J., Sui, M. Q., Liang, M., Tian, L. X., Jiang, T. Z., \& Wang, Y.-F. (2007). Altered baseline brain activity in children with ADHD revealed by resting-state functional MRI. {\sl Brain and Development}, {\bf 29}(2), 83--91.
	\item Zhang, X. \& Chen, J. (2022). Covariate adaptive false discovery rate control with applications to omics-wide multiple testing. {\sl Journal of the American Statistical Association}, {\bf 117}(537), 411--427.
	\item Zou, Q. H., Zhu, C. Z., Yang, Y. H., Zuo, X. N., Long, X. Y., Cao, Q. J., Wang, Y. F., \& Zang, Y. F. (2008). An improved approach to detection of amplitude of low-frequency fluctuation (ALFF) for resting-state fMRI: fractional ALFF. {\sl Journal of Neuroscience Methods}, {\bf 172}(1), 137--141.
\end{description}

\end{document}